\documentclass[11pt]{article}
\usepackage[english]{babel}
\usepackage{fullpage}
\usepackage[utf8]{inputenc}

\title{Fair Division with Social Impact}

\author{Michele Flammini$^{1,2}$, Gianluigi Greco$^2$, Giovanna Varricchio$^2$}
\date{%
	$^1$Gran Sasso Science Institute, L'Aquila, Italy\\%
	$^2$University of Calabria, Rende, Italy\\%
		michele.flammini@gssi.it, gianluigi.greco@unical.it, giovanna.varricchio@unical.it
	\\[2ex]%
	\today
}

\usepackage{multirow}
\usepackage{cellspace}
\usepackage{amsmath,amsthm,amsfonts,mathtools}
\usepackage{xcolor}
\usepackage{thm-restate}
\usepackage{enumitem}
\usepackage{cleveref}
\usepackage{tcolorbox}
\usepackage[linesnumbered,ruled,vlined]{algorithm2e}
\newtheorem{proposition}{Proposition}
\newtheorem{observation}{Observation}
\newtheorem{theorem}{Theorem}

\newtheorem{corollary}{Corollary}

\theoremstyle{definition}
\newtheorem{example}{Example}
\newtheorem{definition}{Definition}
\newtheorem{fact}{Fact}
\renewcommand{\epsilon}{\varepsilon}
\newcommand{\agents}{\mathcal{N}}
\newcommand{\goods}{\mathcal{G}}

\newcommand{\allocation}{\mathcal{A}}

\newcommand{\SW}{\mathsf{SW}}
\newcommand{\set}[1]{\{#1\}}

\newcommand{\modulus}[1]{\vert #1 \rvert }

\newcommand{\OPT}{\textsf{OPT}}
\newcommand{\opt}{\textsf{opt}}
\newcommand{\instance}{\mathcal{I}}

\newcommand{\classNP}{\textsf{NP}}

\newcommand{\classP}{\textsf{P}}

\newcommand{\EF}{\mathsf{EF}}
\newcommand{\EFX}{\mathsf{EFX}}
\newcommand{\EFk}{\mathsf{EFk}}
\newcommand{\PROP}{\mathsf{PROP}}
\newcommand{\PS}{\mathsf{PS}}

\newcommand{\SEF}{\mathsf{sEF}}

\newcommand{\maxUt}{\textsc{maxUt}}
\newcommand{\maxSW}{\textsc{maxSW}}
\newcommand{\maxNash}{\textsc{maxNash}}

\newcommand{\ut}{\textsc{Ut}}
\newcommand{\eg}{\textsc{Eg}}
\newcommand{\nash}{\textsc{Nash}}

\newcommand{\cycleElimination}{\textsc{CycleEliminiation}}
\newcommand{\topOrd}{\textsc{topOrd}}
\DeclareMathOperator{\argmax}{arg\,max}

\begin{document}
\maketitle

\begin{abstract}
	In this paper, we consider the problem of fair division of indivisible goods when the allocation of goods impacts society. Specifically, we introduce a second valuation function for each agent, determining the social impact of allocating a good to the agent. Such impact is considered desirable for the society -- the higher, the better. Our goal is to understand how to allocate goods fairly from the agents' perspective while maintaining society as happy as possible.
	To this end, we measure the impact on society using the utilitarian social welfare and provide both possibility and impossibility results.
	Our findings reveal that achieving good approximations, better than linear in the number of agents, is not possible while ensuring fairness to the agents. These impossibility results can be attributed to the fact that agents are completely unconscious of their social impact. 
	Consequently, we explore scenarios where agents are socially aware, by introducing related fairness notions, and demonstrate that an appropriate definition of fairness aligns with the goal of maximizing the social objective.
\end{abstract}


\section{Introduction}
Fair division is a fundamental research area at the intersection of economics and computer science. In the last decade, it turned out to be a particularly active field due to its applications to several real-world contexts.   It focuses on distributing resources (goods) among individuals (agents) in a fair manner according to their valuations. To achieve fairness, various criteria have been studied, including \emph{envy-freeness} ($\EF$)~\cite{foley1966resource}, where every agent weakly prefers her bundle over the bundle received by any other, and \emph{proportionality} ($\PROP$)~\cite{steinhaus1948problem}, where every agent values her bundle at least as much as her proportional share, that is, the value she attributes to the whole set of goods normalized by the number of agents.  Unfortunately, in indivisible settings, $\EF$ and $\PROP$ allocations do not always exist. Consequently, several relaxations have been considered. These include, but are not limited to, the {\em up to one} \cite{budish2011combinatorial}, {\em up to any} \cite{caragiannis2019unreasonable}, {\em epistemic} \cite{aziz2018knowledge,caragiannis2023new}, and {\em randomized} \cite{azizFreeman2023best} variants of $\EF$, $\PROP$, and related notions. %
The fair division model has also been extended to deal with more complex scenarios, e.g., endowing agents with entitlements~\cite{suksompong2024weighted, chakraborty2021weighted, aziz2023best, hoefer2024best}
or introducing additional constraints \cite{bilo2022almost,barman2023guaranteeing,bredereck2022envy}.

In most of this literature, the focus has been put on achieving fairness for the involved agents, possibly subject to efficiency, without considering any kind of externality. However, there are many scenarios in which allocating resources has an impact on society.
We can imagine, for example, there are some green strategies that the government must assign to companies. A company will profit from implementing the strategy (as companies more sensitive to green aspects are increasingly appealing to customers), while the implementation will positively impact society by reducing emissions. Companies judge the outcome based solely on the profit they and other companies can attain, whereas society aims to maximize the overall reduction of emissions.

With this paper, we expand the scope of fair division of indivisible goods incorporating the concept of social impact. Our goal is twofold: To ensure fairness among agents and guarantee a high social impact.

\paragraph{Our Contribution.}
\begin{table*}[tb]
	\centering
	\begin{tabular}{l c c c c c }
		Price of &$\EF1$ & $\EFX$ & $\EF2$ & $\PROP1$ & epistemic $\EF1$ 
		\\
		\hline 
		\\[-1.5\medskipamount] 
		\multirow{2}{*}{UB}
		& {$O(n^2)$ {\footnotesize (Thm.~\ref{thm:EF1n2})}}
		& \multirow{2}{*}{$n^\star$ {\footnotesize (Thm.~\ref{thm:approxIdentical})}}
		& \multirow{2}{*}{$n$ {\footnotesize (Thm.~\ref{thm:EF2approx})}}
		& \multirow{2}{*}{$n$ {\footnotesize (Thm.~\ref{thm:approxEpistemic})}}
		& \multirow{2}{*}{$n$  {\footnotesize (Thm.~\ref{thm:approxEpistemic})}}
		\\[0.1cm]
		& {$n^{\dagger}$ {\footnotesize (Thm.~\ref{thm:EF1approxOrd})}} &&&& \\ [0.3cm]
		{LB}
		& {$n$ {\footnotesize (Thm.~\ref{thm:LButEFk})}}
		& {$n$ {\footnotesize (Cor.~\ref{cor:LButEFX})}}
		& {$n-1$ {\footnotesize (Thm.~\ref{thm:LButEFk})}}
		& {$n$ {\footnotesize (Cor.~\ref{cor:LButPROP1})}}
		& {$n$  {\footnotesize (Cor.~\ref{cor:LButPROP1})}} 
		\\[0.1cm]
		\hline
	\end{tabular}
	\caption{Lower (LB) and upper (UB) bounds to the PoF w.r.t.\ the maximum utilitarian social impact. The results marked with $^\dagger$ and $^\star$ hold for ordered and identical valuations, respectively. All the LBs have been proven even under identical valuations.}
	\label{tab:PoF}
\end{table*}

We refer to the preliminaries for the formal definition of the considered fairness criteria.

In this work, we aim at determining the loss in efficiency (expressed by the utilitarian welfare w.r.t.\ the social impact of agents, i.e., the sum of the social impact of agents) while pursuing fairness. In particular, we estimate the {\em Price of Fairness} (PoF), that is, the worst-case ratio between the value of the maximum utilitarian welfare and the maximum utilitarian welfare achievable by a fair solution. 
Besides determining suitable bounds for several fairness notions, we also provide efficient algorithms for computing the corresponding approximately optimal and fair allocations.  
An overview of our results for $n$ agents is given in~\Cref{tab:PoF}.

For all the fairness notions we considered, an approximation better than linear in the number of agents is not possible. 
This can be attributed to the fact that agents are unconscious of their social impact. We therefore investigate which kind of consciousness on the agents' side allows the existence of fair and optimal outcomes. To this end, we introduce the notion of {\em socially aware envy-freeness} ($\SEF$) and establish that an allocation maximizing the utilitarian social impact and $\SEF$ {\em up to one good} always exists.
Due to space constraints, we refer to the full paper for the details.
\paragraph{Related Work.}
The literature on fair division is wide; we refer the interested reader to some recent surveys~\cite{amanatidis2023fair, suksompong2021constraints,suksompong2024weighted}. We next discuss the work directly related to our.

The goal of achieving fairness while maximizing the utilitarian welfare has been extensively studied when $s_i = v_i$.
Allocations having maximum agents' utilitarian social welfare
do not necessarily guarantee fairness, even in divisible settings. For this reason, the PoF with respect to such a welfare has been extensively studied \cite{nicosia2017price,barman2020optimal,bei2021price,li2024complete}. 
In this stream of research, several fairness notions as well as restrictions on the valuations have been exploited. 
Furthermore, the problem of determining whether there exists a fair allocation that is also socially optimal has been considered in \cite{aziz2020polynomial, bu2022complexity}. In a nutshell, they show that for all the considered criteria, both determining the existence of a maximum utilitarian allocation which is also fair and computing an allocation that maximizes the utilitarian welfare among fair solutions are strongly NP-hard problems.

The idea of employing two distinct valuation functions for each agent has been examined in prior research~\cite{bu2023fair} albeit from a different viewpoint. In this approach, one valuation function captures the agent's preferences while the other represents the allocator's interpretation of the agent's preferences -- what the allocator thinks a bundle of goods should value for a certain agent. The main objective in~\cite{bu2023fair} was to achieve fairness in both the agents' and the allocator's perspectives. 
In our model, society does not care if the allocation is fair according to the social impact functions; rather, society would like the allocation to maximize its own interests. In~\cite{bu2023fair}, the authors briefly discuss the problem of maximizing the allocator's utilitarian welfare by providing a simple $m$-approximation algorithm, where $m$ is the number of goods. With our contribution, we improve this result achieving an approximation polynomial in the number of agents which gives a significant improvement as the number of goods might be extremely larger than, even exponential in, the number of agents.


\section{Preliminaries}
In this section, we provide the basics of fair division as well as a formal description of the framework with social impact.
Hereafter, we denote by $[t]$, for $t\in \mathbb{N}$, the set $\set{1, \dots, t}$.

In fair division of indivisible goods, we are given a set of $n$ {\em agents} $\agents$ and a set of $m$ {\em goods} $\goods$, also called {\em items}, that have to be allocated to agents in a fair manner.
An allocation $\allocation$ consists of disjoint bundles $(A_1, \dots, A_n)$, where each $A_i$ is the set of goods, the bundle, allocated to agent $i$. Unless stated otherwise, we assume allocations to be complete, meaning that all goods have been allocated, i.e., $\cup_{i\in \agents}A_i =\goods$. If an allocation is not complete, we call it \emph{partial}.
The fairness of an allocation is determined according to the agents' preferences; for each agent $i\in \agents$ there exists a \emph{valuation function} $v_i:2^\goods \rightarrow \mathbb{R}_{\geq 0}$ mapping bundles of goods to non-negative real values. A fair division instance is defined by the triple $\instance = (\agents, \goods, \set{v_i}_{i\in \agents})$.
In this paper, we focus on {\em additive valuations}, i.e., for each bundle $A \subseteq \goods $, $v_i(A)= \sum_{g\in A}v_i(\set{g})$. For the sake of simplicity, we write $v_i(g)$ to denote $v_i(\set{g})$. Moreover, we assume $v_i(\emptyset)=0$.

The most relevant and intuitive notion of fairness is the one of {\em envy-freeness} ($\EF$), which postulates that, in a given allocation $\allocation$, no agent envies the bundle of any other agent, i.e.,  for each $i,j\in \agents$, $v_i(A_i)\geq v_i(A_j)$. Unfortunately, such an allocation might not exist even in simple cases: Consider an instance with two agents and one good, no matter who will receive the good, the other will be envious.
Thus, several relaxations have been proposed. Among the others: 

An allocation $\allocation$ is said to be {\em envy-free up to one good} ($\EF1$) if for each pair $i,j$ of agents, either $A_j=\emptyset$, or $\exists g\in A_j$ such that $v_i(A_i)\geq v_i(A_j\setminus\set{g})$. 

An allocation $\allocation$ is said to be {\em envy-free up to any good} ($\EFX$) if for each pair of agents $i,j$ either $A_j=\emptyset$ or, for each $g\in A_j$, $v_i(A_i)\geq v_i(A_j\setminus\set{g})$.

While $\EF1$ allocations are known to always exist, even for monotone valuations, the existence of $\EFX$ allocations remains a fundamental open question for additive valuations.
The $\EF1$ definition has been further extended allowing the ipotetical removal of at most $k$ goods from the envied bundle, in this case, we talk about {\em envy-free up to $k$ goods }($\EFk$).

Besides comparison-based fairness notions like $\EF$, shared-based ones have also been introduced. Among these, {\em proportionality} ($\PROP$) guarantees to each agent at least her proportional share; formally, for each $i\in \agents$, $v_i(A_i)\geq \PS_i  \coloneqq  \frac{1}{n}\cdot v_i(\goods)$. Achieving proportionality,  as for $\EF$, might not be possible. Hence, {\em proportional up to one good}  ($\PROP1$), the analogous of $\EF1$ for proportionality, has been proposed. 
$\allocation$ is said to be $\PROP1$ if for each $i$ there exists $g\not\in A_i$ such that  $v_i(A_i \cup \set{g})\geq \PS_i$. For additive valuations, is known $\EF \Rightarrow \PROP$ and $\EF1 \Rightarrow \PROP1$. 

We finally define the epistemic variant of $\EF1$; $\allocation$ is said to be {\em epistemic} $\EF1$ if, for each $i\in\agents$, there exists an allocation $\allocation^*$ such that $A_i^* = A_i$ and $\allocation^*$ is $\EF1$ in $i$'s perspective. Such an allocation $\allocation^*$ is also called the epistemic $\EF1$ certificate (for agent $i$). In other words, for each agent, there exists an allocation where the bundle $A_i$ makes her $\EF1$ satisfied. Because of the existence of this certificate for each agent $i$, being $\EF1\Rightarrow\PROP1$, we know that $A_i$ makes $i$ $\PROP1$ (as share-based notions solely depend on the received bundle); hence, epistemic $\EF1\Rightarrow\PROP1$.

In the sequel, we focus on additive valuations. An interesting subclass is the one of {\em ordered valuations}, where it is further assumed that there exists a fixed ordering of the goods $g^1, \dots, g^m$ such that, for each agent $i$ and  $h,k \in [m]$ with $h<k$, it holds $v_i(g^h)\geq v_i(g^k)$.
A special case of ordered instances is the one of {\em identical valuations}, where there exists an additive function $v$ s.t.\ $v_i=v$, for each $i\in\agents$.

To understand our algorithms, we first overview standard approaches to compute $\EF1$ allocations.
\paragraph{The Envy-Cycle Elimination.}
The envy-cycle elimination algorithm, introduced by~\cite{lipton2004approximately}, starts with an empty allocation $\allocation$.
At each round, one available good $g$ is allocated to an agent $i$ who is not envied by any other agent.
After allocating the good $g$, if there exists an agent $j$ envying agent $i$, the $\EF1$ is satisfied as agent $j$ was not envious before the allocation of $g$. 
To ensure the existence of such an agent $i$, there is
a subroutine, the \cycleElimination, which takes as input the current partial allocation $\allocation$ and the corresponding {\em envy-graph} $G$ where nodes are the agents and there is a directed edge from $i$ to $j$ if $i$ envies $j$ in $\allocation$. The \cycleElimination\ finds a cycle in $G$
and trades bundles along the cycle, that is, for each directed edge ${i,j}$ in the cycle, agent $i$ receives the bundle $A_j$,
and it is run until there is at least one agent who is not envied by anyone else.
\cycleElimination\ preserves the $\EF1$condition among agents, even for monotone valuations.
\paragraph{Sequential Allocation Algorithms.}
To describe sequential allocation algorithms, we first introduce the concept of {\em picking sequence}. A picking sequence $\sigma= (\sigma(1), \dots, \sigma(m))$ is a sequence of $m$ entries where $\sigma(k)\in \agents$, for each $k\in [m]$. A {\em prefix} $\sigma^k$, for $k\in [m]$, is the subsequence $(\sigma(1), \dots, \sigma(k))$. A sequence $\sigma$ is said to be {\em recursively balanced} (RB) if for any prefix $\sigma^k$ and any pair of agents $i,j$, denoted by $p_i$ and $p_j$ the number of occurrences of $i$ and $j$ in $\sigma^k$, respectively, it holds $\modulus{p_i-p_j}\leq 1$.

A sequential allocation algorithm takes as input a fair division instance and a picking sequence of length $m$, where $m$ is the number of goods to be allocated, and proceeds in $m$ steps. It starts with an empty allocation. At step $k$, agent $\sigma(k)$ choose the most preferred available good which is put in her bundle and removed from the set of available ones.

\begin{fact}[From~\cite{azizConstrainedRR}]\label{fact:RBisEF1}
	For additive valuations, if $\sigma$ is RB, then, the sequential algorithm returns an $\EF1$ allocation.
\end{fact}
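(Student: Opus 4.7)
The plan is to fix two arbitrary agents $i,j\in\agents$ and prove the $\EF1$ condition from $i$ towards $j$. First, I would extract the subsequence $\tau$ of $\sigma$ obtained by deleting every entry different from $i$ and $j$. The RB property applied to prefixes of $\sigma$ immediately forces $\tau$ to \emph{strictly alternate} between $i$ and $j$: two consecutive occurrences of the same agent in $\tau$ would produce a prefix of $\sigma$ in which $\modulus{p_i-p_j}\geq 2$, contradicting RB. Consequently $\modulus{\,\modulus{A_i}-\modulus{A_j}\,}\leq 1$, and there are only two cases, determined by who appears first in $\tau$.

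Let $g_i^k$ (resp.\ $g_j^k$) denote the good picked by $i$ (resp.\ $j$) at her $k$-th turn. If $i$ is first in $\tau$, then for every $k$, $i$'s $k$-th turn strictly precedes $j$'s $k$-th turn, so $g_j^k$ was still available when $i$ chose $g_i^k$; the greedy rule together with additivity yields $v_i(g_i^k)\geq v_i(g_j^k)$, and summing over $k=1,\dots,\modulus{A_j}$ gives $v_i(A_i)\geq v_i(A_j)$, i.e.\ no envy at all. If instead $j$ is first in $\tau$, I would shift the pairing by one: when $i$ makes her $k$-th pick, $j$'s $(k+1)$-th pick has not yet occurred, hence $v_i(g_i^k)\geq v_i(g_j^{k+1})$. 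Summing over $k=1,\dots,\modulus{A_i}$ and using additivity produces
\[
v_i(A_i)\;\geq\;\sum_{k=1}^{\modulus{A_i}} v_i(g_j^{k+1})\;=\;v_i\!\left(A_j\setminus\{g_j^1\}\right),
\]
which witnesses $\EF1$ after hypothetically removing $j$'s very first pick (note this also covers the subcase $\modulus{A_j}=\modulus{A_i}$, where the last term of the sum is simply dropped).

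Since the pair $(i,j)$ was arbitrary, the final allocation is $\EF1$. The main delicate step is the \emph{availability} argument — guaranteeing that $g_j^k$ (or $g_j^{k+1}$) has not been snatched by some \emph{other} agent between $i$'s and $j$'s consecutive turns. This is where the strict alternation of $\tau$, and hence the RB hypothesis applied to \emph{every} prefix rather than only to the full sequence, is essential; once availability is in place, the rest is a one-line telescoping that crucially exploits additivity (and would fail for general monotone valuations).
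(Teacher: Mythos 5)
Your key structural claim is false: recursive balancedness does \emph{not} force the restriction $\tau$ of $\sigma$ to the pair $\{i,j\}$ to strictly alternate. Two consecutive occurrences of $i$ in $\tau$ change the difference $p_i-p_j$ by $2$, but this is perfectly compatible with RB if the difference goes from $-1$ to $+1$: for instance $\sigma=(j,i,i,j)$ (a ``snake'' order) is RB for two agents, yet $\tau$ is not alternating. Because of this, your case analysis by ``who appears first in $\tau$'' is unsound, and its first branch reaches a conclusion that is simply wrong: for the RB sequence $(i,j,j,i)$ agent $i$ appears first but can strictly envy $j$ (take three goods of value $1$ and one of value $0$ for $i$), so ``$i$ first $\Rightarrow$ no envy at all'' cannot be salvaged. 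Note also that the paper does not prove this statement; it imports it as a fact from the cited reference, so the proof has to stand entirely on its own.

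The good news is that the inequality you prove in your second branch is the correct argument and needs neither alternation nor a case split. From RB alone one gets, for every $k$, that $i$'s $k$-th pick occurs strictly before $j$'s $(k+1)$-th pick: otherwise the prefix of $\sigma$ ending at $j$'s $(k+1)$-th occurrence would contain at least $k+1$ occurrences of $j$ and at most $k-1$ of $i$, violating $\modulus{p_i-p_j}\leq 1$. Since goods are only removed when picked, the good $g_j^{k+1}$ is therefore still available at $i$'s $k$-th turn (your ``availability'' worry is automatic here), so greediness gives $v_i(g_i^k)\geq v_i(g_j^{k+1})$. Combining this with $\modulus{A_j}\leq \modulus{A_i}+1$ (RB on the full sequence) and summing over $k=1,\dots,\modulus{A_j}-1$ yields $v_i(A_i)\geq v_i(A_j\setminus\{g_j^1\})$, i.e.\ the $\EF1$ condition for the arbitrary pair $(i,j)$, exactly your telescoping step. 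So the fix is to delete the alternation lemma and the two cases, and run the shifted pairing uniformly, justified by the prefix argument above.
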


Given an ordering of the agents $1, \dots, n$, a well-known RB picking sequence is $\sigma=(1, \dots, n,$ $ 1, \dots, n, 1, \dots)$ whose corresponding sequential algorithm is the famous \emph{round-robin} (RR) algorithm.

\paragraph{Social Impact.}
In this paper, we assume there exists an additive {\em social impact} function $s_i:2^\goods \rightarrow \mathbb{R}_{\geq0}$, which represent the happiness of the society for allocating a bundle of goods to an agent $i\in\agents$. We assume, the higher the social impact, the better is for society.
We pay particular attention to the problem of maximizing the {\em utilitarian} social welfare, given by the sum of the agents' social impact, that is, $\SW(\allocation)=\sum_{i} s_i(A_i)$. 
We denote by \maxUt\ the problem of maximizing the utilitarian welfare, by $\OPT$ any optimal allocation for a given instance of \maxUt\, and by $\opt=\SW(\OPT)$ its social welfare.


\section{Impossibility Results}\label{sec:impossibilityMaxUt}

Let us start by observing that it is possible to efficiently compute an optimal solution of \maxUt\ if the $\EF1$ condition is not a constraint. It is sufficient to allocate each good $g\in \goods$ to an agent $i^*$ who has the highest social impact for it, that is, to $i^*\in \argmax_{i\in\agents} s_i(g)$. 
Nonetheless, an approximation better than $n$ is not possible for \maxUt\ while requiring $\EF1$. This is a direct consequence of the following result.

\begin{theorem}\label{thm:LButEFk}
	An approximation better than $n-k+1$ to $\opt$ is not possible when requiring $\EF k$, even for identical agents.
\end{theorem}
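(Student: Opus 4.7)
The plan is to exhibit a single instance with identical valuations whose Price of Fairness against \maxUt\ is at least $n-k+1$. I would take $n$ agents with the identical additive valuation $v(g)=1$ for every good, together with a set $\goods$ of $m$ goods whose size is to be fixed below. The social impact is concentrated on agent~$1$: $s_1(g)=1$ for every $g$, and $s_i(g)=0$ for every $i\neq 1$ and every $g$. Allocating every good to agent~$1$ attains the maximum utilitarian welfare $\opt=m$, and for any allocation $\allocation$ the social welfare is simply $\SW(\allocation)=|A_1|$.

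The central step is a counting upper bound on $|A_1|$ valid for every $\EFk$ allocation. Because $v$ is identical with $v(g)=1$, the $\EFk$ condition between agent~$1$ and any $i\neq 1$ reduces to $|A_i|\geq |A_1|-k$. Summing these inequalities over $i\neq 1$ and adding $|A_1|$ gives $m=\sum_i |A_i|\geq n|A_1|-(n-1)k$, hence $|A_1|\leq (m+(n-1)k)/n$, and therefore $\opt/\SW(\allocation)\geq mn/(m+(n-1)k)$.

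To make this ratio equal $n-k+1$, for $k\geq 2$ I would solve $mn=(n-k+1)(m+(n-1)k)$, obtaining $m=(n-k+1)(n-1)k/(k-1)$ (replicating the instance to clear the denominator if needed for integrality). The allocation that gives agent~$1$ exactly $(n-1)k/(k-1)$ goods and every other agent $k(n-k)/(k-1)$ goods satisfies $|A_1|-|A_i|=k$, so it is $\EFk$, has the correct total size $m$, and saturates the counting bound; thus the PoF equals $n-k+1$ on this instance. For $k=1$ I would simply take $m=n$: the bound $|A_1|\leq 2-1/n<2$ together with integrality forces $|A_1|\leq 1$, so welfare is $1$ and the ratio is exactly $n=n-k+1$.

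The main subtlety is the integrality of $|A_1|$: the continuous counting inequality only tends asymptotically to $n$, so the specific choices of $m$ above (or the integer floor argument at $k=1$) are precisely what turn the inequality into the tight equality that witnesses the claimed lower bound $n-k+1$, rather than only an approximation of it.
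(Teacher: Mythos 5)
Your proposal is correct and follows essentially the same route as the paper: the same hard instance (identical unit valuations with all social impact concentrated on agent $1$), the same reduction of the $\EFk$ constraint to a bound of $k$ on pairwise bundle-size differences, and a counting bound on $|A_1|$ yielding the ratio $\geq n-k+1$. The only difference is bookkeeping: the paper fixes $m=(n-k)n+k-1$ once, which gives $|A_1|\leq n-1$ and the bound for every $k$ directly, whereas you tailor $m$ to the averaging inequality, which forces the separate $k=1$ case and the replication step for integrality.
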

\begin{proof}
	Consider an instance with $n$ agents having identical valuation functions and $m=h\cdot n +k -1$ goods, where $h=n - k$. All agents value $1$ every good, agent $1$ has a social impact of $1$ for each good, and all remaining ones social impact $0$. In such an instance $\opt=m$ and it is obtained by assigning all goods to agent $1$. 
	
	Concerning $\EF k$ allocations, no agent can receive more than $h + k -1$ goods without violating the $\EF k$ condition, otherwise, there exist two bundles differing by more than $k$ goods -- a contradiction to $\EF k$. 
	
	In conclusion, agent $1$ cannot receive more than $h + k -1$ goods in an $\EF k$ allocation, and hence, no approximation better than
	$\frac{m}{h+k-1}= \frac{h\cdot n + k -1}{h+k-1}$
	is possible, where the last equality holds as $h=n - k$. 
\end{proof}
Since in the just described instance all agents value $1$ all goods, $\EF1$ is equivalent to $\EFX$ showing the following.

\begin{corollary}\label{cor:LButEFX}
	An approximation better than $n$ to \maxUt\ is not possible when requiring $\EFX$.
\end{corollary}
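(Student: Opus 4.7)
The plan is to obtain the corollary as an immediate consequence of \Cref{thm:LButEFk} specialized to $k=1$. Setting $k=1$ in the theorem yields a lower bound of $n-k+1 = n$ on the approximation achievable by any $\EF1$ allocation. So the only thing to verify is that, for the specific family of instances constructed in the proof of \Cref{thm:LButEFk}, the $\EF1$ and $\EFX$ conditions coincide; then every $\EFX$ allocation is in particular $\EF1$, and the same lower bound transfers.

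To see why $\EF1$ and $\EFX$ coincide on that instance, I would recall that all agents share an identical additive valuation that assigns value $1$ to every good. Hence for any bundle $A_j$ and any good $g \in A_j$, the quantity $v_i(A_j \setminus \{g\}) = |A_j| - 1$ does not depend on which good $g$ is removed. Consequently the existential "there exists $g \in A_j$" in the $\EF1$ definition and the universal "for every $g \in A_j$" in the $\EFX$ definition are equivalent on this instance. Thus any $\EFX$ allocation on that instance is $\EF1$ (and vice versa), so agent $1$ still cannot receive more than $h+k-1 = n-1$ goods, and the ratio $\frac{m}{n-1} = \frac{(n-1)n + 0}{n-1} = n$ remains as the lower bound for $\EFX$.

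There is no real obstacle here; the corollary is essentially a direct read-off from the theorem, relying only on the trivial observation that when all items have the same value under an identical valuation, the "up to one" and "up to any" relaxations collapse to the same condition. I would therefore present the proof as a two-line argument: invoke \Cref{thm:LButEFk} with $k=1$ to get the $n$ lower bound for $\EF1$, and note that on the constructed instance $\EF1 \Leftrightarrow \EFX$, so the bound applies to $\EFX$ as well.
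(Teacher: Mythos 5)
Your proposal is correct and follows exactly the paper's argument: specialize \Cref{thm:LButEFk} to $k=1$ and observe that, since all goods have identical value $1$ in the constructed instance, the $\EF1$ and $\EFX$ conditions coincide, so the lower bound of $n$ carries over to $\EFX$. Nothing is missing.
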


Moreover, in the instance described in the proof of \Cref{thm:LButEFk}, if $k=1$, the only way to get a $\PROP1$ or epistemic $\EF1$ allocation is to allocate to each agent the same number of goods -- in fact, for such instance, $\PROP1$ is equivalent to $\PROP$ and $\PROP1$ is a necessary condition for epistemic $\EF1$. These observations together imply the following.

\begin{corollary}\label{cor:LButPROP1}
	An approximation better than $n$ to \maxUt\ is not possible when requiring $\PROP1$ or epistemic $\EF1$.
\end{corollary}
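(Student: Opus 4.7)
The plan is to reuse the very instance constructed in the proof of \Cref{thm:LButEFk} with $k=1$ (and, for the $\PROP1$ half, a scaled-up version of it). Recall that this instance has $n$ identical-valuation agents with $v_i(g)=1$ on every good, social impacts $s_1\equiv 1$ and $s_j\equiv 0$ for $j>1$, so $\opt$ is attained by handing every good to agent~$1$. The corollary therefore reduces to upper bounding $|A_1|$ among fair allocations, since $\SW(\allocation)=|A_1|$.

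For epistemic $\EF1$ I would keep $m=n(n-1)$ and, for each agent $i$, invoke the promised certificate $\allocation^*$ with $A_i^*=A_i$. Because valuations are identical, the $\EF1$ condition from $i$'s viewpoint collapses to the combinatorial statement $|A_j^*|\le |A_i|+1$ for every $j\neq i$ (the case $A_j^*=\emptyset$ being trivial). Summing these inequalities against $\sum_j|A_j^*|=m$ yields $|A_i|\ge (m-n+1)/n=(n-1)^2/n$, so by integrality $|A_i|\ge n-1$. Together with $\sum_i|A_i|=n(n-1)$, this forces $|A_i|=n-1$ for every agent; in particular $\SW(\allocation)=|A_1|=n-1$ and the ratio against $\opt$ is exactly $n$.

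For $\PROP1$ I would use the same identical-valuation family but with $m=tn$ for an arbitrarily large integer $t$. Here $\PS_i=t$ for every $i$, so $\PROP1$ forces $|A_i|\ge t-1$ across the board; hence $|A_1|\le tn-(n-1)(t-1)=t+n-1$, and therefore $\opt/\SW(\allocation)\ge tn/(t+n-1)$, which tends to $n$ as $t\to\infty$. Consequently no constant strictly smaller than $n$ can be guaranteed as an approximation ratio under $\PROP1$ either.

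The delicate step is the $\PROP1$ direction: in the single instance used for $\EFk$, the $\PROP1$ constraint only gives the weaker bound $|A_i|\ge n-2$ (so $\PROP1$ is strictly weaker than $\PROP$ there), and the factor $n$ genuinely requires the limit over the family $\{m=tn\}_{t\in\NN}$. The epistemic $\EF1$ direction, by contrast, is an immediate integer pigeonhole once the certificate condition has been rewritten in purely combinatorial form.
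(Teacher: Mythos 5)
Your proof is correct, but it takes a genuinely different route from the paper's. The paper argues the corollary on the single instance of \Cref{thm:LButEFk} with $k=1$ (so $m=n(n-1)$ unit-valued goods), asserting that in this instance $\PROP1$ coincides with $\PROP$ and is implied by epistemic $\EF1$, so every bundle must have exactly $n-1$ goods and the ratio is exactly $n$ for both notions. You instead (i) treat epistemic $\EF1$ directly on that same instance by summing the certificate inequalities $|A_j^*|\le |A_i|+1$ over $j$ and using integrality of $(n-1)^2/n$, which forces $|A_i|=n-1$ without passing through $\PROP1$, and (ii) for $\PROP1$ you switch to the scaled family $m=tn$ and obtain the factor $n$ only in the limit $t\to\infty$. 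Your side remark is in fact substantive: in the paper's instance $\PS_i=n-1$, so $\PROP1$ only forces $|A_i|\ge n-2$ (for $n=3$, $m=6$, the split $(4,1,1)$ is $\PROP1$ but not $\PROP$), so the claimed equivalence of $\PROP1$ and $\PROP$ there is inaccurate and, taken literally, that single instance yields only a bound of about $n/2$ for $\PROP1$. Hence your limiting family is not a stylistic detour but a repair of the $\PROP1$ half, while your certificate-counting argument recovers the exact factor $n$ for epistemic $\EF1$, where equal bundle sizes really are forced; what the paper's route buys, where it is valid, is brevity and an exact rather than asymptotic bound on a single instance.
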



\section{Approximating \maxUt\ subject to fairness}
Given the above impossiblity results, we now investigate what is the best achievable approximation to $\opt$ when requiring fairness. We first focus on ordered and identical valuations, and then turn the attention to the broader class of additive valuations.

\subsection{A general approx. algorithm for ordered valuations}
In this section, we present a general approximation algorithm providing $\EF1$ allocations in the case of ordered instances, proving the following theorem. 

\begin{theorem}\label{thm:EF1approxOrd}
	For ordered valuations, there exists a poly-time algorithm computing an $\EF1$ allocation that is an $n$-approximation of $\opt$.  
\end{theorem}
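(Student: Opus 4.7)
My plan is to design a sequential allocation algorithm whose picking sequence is recursively balanced (so that $\EF1$ follows immediately from Fact~\ref{fact:RBisEF1}), while tuning the order within each ``round'' by solving a maximum-weight bipartite matching on social impacts.

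The key structural observation I would exploit is the following consequence of orderedness: if the picking sequence $\sigma$ has the property that each block of $n$ consecutive positions contains every agent exactly once (which is trivially RB), then in the sequential algorithm the agent occupying the $j$-th slot of the $t$-th block necessarily ends up with the good $g^{(t-1)n+j}$. Indeed, at the start of block $t$ the available goods are exactly $g^{(t-1)n+1},\dots,g^m$, and since all agents share the common ranking, each picker within the block just takes the next good in the fixed order. Hence the social welfare collected in block $t$ equals $\sum_{j} s_{\pi_t(j)}(g^{(t-1)n+j})$ for the permutation $\pi_t$ that we choose within that block.

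This reduces the design problem to: for each block $t=1,\dots,\lceil m/n\rceil$ independently, pick $\pi_t$ maximizing $\sum_j s_{\pi_t(j)}(g^{(t-1)n+j})$. That is exactly a maximum-weight bipartite matching between the (at most) $n$ agents and the (at most) $n$ goods of the block, solvable in polynomial time (e.g.\ by Hungarian). The overall algorithm partitions the goods into consecutive blocks of size $n$ (the last block may be smaller), solves a matching in each, concatenates the resulting permutations into a picking sequence $\sigma$, and runs the sequential allocation. By construction $\sigma$ is RB, so by Fact~\ref{fact:RBisEF1} the produced allocation is $\EF1$.

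For the approximation guarantee, let $B_t$ denote the goods in block $t$, write $\opt_t:=\sum_{g\in B_t}\max_{i}s_i(g)$ so that $\opt=\sum_t \opt_t$, and let $M_t$ be the value of the matching computed for block $t$. On one hand, any matching can always include the single (agent, good) pair of maximum weight in $B_t$ and be completed with nonnegative contributions, so $M_t\ge \max_{g\in B_t}\max_i s_i(g)$. On the other hand, $\opt_t$ is a sum of at most $n$ terms each bounded by $\max_{g\in B_t}\max_i s_i(g)$, hence $\opt_t\le n\cdot M_t$. Summing over $t$ yields $\SW(\allocation)=\sum_t M_t\ge \opt/n$. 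The only mildly delicate point is the last (possibly incomplete) block, but the same inequality applies since it contains fewer than $n$ goods and an $r$-to-$n$ matching still captures the maximum-weight pair; I expect this to be the only obstacle worth a line or two of care, while the rest of the argument is essentially the observation above combined with Fact~\ref{fact:RBisEF1}.
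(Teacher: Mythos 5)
Your proof is correct, and it shares the paper's structural skeleton -- partition the goods of the common order into consecutive blocks of $n$ and give each agent exactly one good per block, which is exactly how the paper obtains $\EF1$ (its Lemma~\ref{lemma:identicalRB} is proved by the same RB-picking-sequence argument you give via Fact~\ref{fact:RBisEF1}). Where you genuinely diverge is in the within-block selection and the accounting. The paper first computes an optimal allocation $\OPT$ (each good to an agent maximizing its social impact) and then \emph{transforms} it: in block $h$, agent $i$ keeps her socially best good among $\OPT_i\cap\goods^h$, and the analysis is per agent, yielding $n\cdot s_i(A_i')\geq s_i(\OPT_i)$ for every $i$. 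You instead solve, independently in each block, a maximum-weight perfect matching on social impacts and argue per block: $M_t\geq\max_{g\in B_t}\max_i s_i(g)\geq \opt_t/n$, using $\opt=\sum_g\max_i s_i(g)$. Both give the claimed $n$-approximation in polynomial time; your per-block matching is self-contained and never needs to name an optimal allocation, and within each block it is at least as good as the paper's rule, but the paper's per-agent guarantee is strictly more informative -- it immediately extends the $n$-approximation to egalitarian, Nash, and other homogeneous welfare functions, which your utilitarian, per-block bound does not.

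One small point to make explicit: ordered valuations only give a \emph{weak} common ranking, so in the sequential run an indifferent agent could pick a later tied good, in which case the realized allocation need not coincide with your planned matching and $\SW(\allocation)=\sum_t M_t$ could fail. Fix the tie-breaking to favor the earlier good in the common order (Fact~\ref{fact:RBisEF1} holds for any tie-breaking, and the paper makes the same stipulation in its proof of Lemma~\ref{lemma:identicalRB}), or simply output the block-matching assignment directly and invoke the one-good-per-block sufficient condition for $\EF1$; either way the gap closes in one line. Your treatment of the last, possibly incomplete block is fine.
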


Assume w.l.o.g.\ $m=q\cdot n$, otherwise, we can add dummy goods of value and social impact  $0$ for all agents. The dummy goods will be ranked last from all agents maintaining the ordered valuations property.
Let $g^1, g^2,\cdots, g^m$ the common ordering of the goods according to the ordered valuations assumption. We refer to $g^k$ as the $k$-th good. 
Every sequential allocation algorithm we employ will assign $g^k$ to the $k$-th agent in the picking sequence order. For our purposes, we also partition $\goods$, according to their ordering, into blocks of cardinality $n$. In particular, 
we denote by $\goods^h = \set{g^{(h-1)n +1}, \cdots, g^{hn}}$, for $h\in[q]$. In other words, block $\goods^1$ contains the $n$ most preferred goods; block $\goods^2$ contains the most $n$ preferred goods among $\goods \setminus \goods^1$, and so forth.

\begin{restatable}{lemma}{identicalRB}\label{lemma:identicalRB}
	Given an ordered fair division instance, if $\allocation$ is an allocation in which each agent receives a bundle containing exactly one good in each $\goods^h$, $\forall h\in[q]$, then, $\allocation$ is $\EF1$.
\end{restatable}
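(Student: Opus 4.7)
The plan is to exhibit, for each ordered pair of agents $i,j$, an explicit good in $A_j$ whose removal restores envy-freeness of $i$ towards $j$. Write $A_i = \{g_i^1, \dots, g_i^q\}$ and $A_j = \{g_j^1, \dots, g_j^q\}$, where $g_i^h, g_j^h \in \goods^h$ are the unique goods that $i$ and $j$, respectively, receive from block $\goods^h$. The good we remove from $A_j$ will always be $g_j^1$, the one coming from the first (most-preferred) block.

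The key structural fact I will invoke is that, under ordered valuations, every good in $\goods^h$ is at least as valuable (for every agent) as every good in $\goods^{h+1}$: indeed, $\goods^h$ consists of the $n$ positions $(h-1)n+1, \dots, hn$ in the common ordering, all of which precede the positions in $\goods^{h+1}$. In particular, for the fixed agent $i$, this gives the block-wise inequality $v_i(g_i^h) \geq v_i(g_j^{h+1})$ for every $h \in [q-1]$.

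Summing these $q-1$ inequalities produces
\[
\sum_{h=1}^{q-1} v_i(g_i^h) \;\geq\; \sum_{h=1}^{q-1} v_i(g_j^{h+1}) \;=\; \sum_{h=2}^{q} v_i(g_j^h) \;=\; v_i(A_j \setminus \{g_j^1\}),
\]
and since $v_i(A_i) = \sum_{h=1}^{q} v_i(g_i^h) \geq \sum_{h=1}^{q-1} v_i(g_i^h)$ by non-negativity of $v_i(g_i^q)$, we conclude $v_i(A_i) \geq v_i(A_j \setminus \{g_j^1\})$, which is exactly the $\EF1$ guarantee of $i$ towards $j$. Since the pair was arbitrary, $\allocation$ is $\EF1$.

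The argument is essentially a one-shot pairing, so there is no real obstacle; the only subtlety worth highlighting is the need for additivity (to break $v_i$ on $A_j \setminus \{g_j^1\}$ into a sum over singletons) and the convention that blocks have exactly $n$ items, which is guaranteed by the padding $m = q \cdot n$ performed before the lemma is applied.
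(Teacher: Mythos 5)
Your proof is correct, but it takes a different route from the paper. The paper does not argue envy-freeness up to one good directly: it shows that any allocation satisfying the hypothesis can be reproduced by a sequential allocation algorithm run with a recursively balanced picking sequence (setting $\sigma(k)$ to the owner of $g^k$ and breaking ties by the common goods ordering, so that the sequential algorithm picks $g^1,\dots,g^m$ in order), and then invokes Fact~\ref{fact:RBisEF1} to conclude $\EF1$. Your argument is instead a direct, self-contained charging argument: you pair $g_i^h$ against $g_j^{h+1}$ using the block-wise dominance $v_i(g) \geq v_i(g')$ for $g\in\goods^h$, $g'\in\goods^{h+1}$ (valid for every agent since the ordering is common), sum over $h$, and exhibit the block-$1$ good of $A_j$ as the one whose removal kills the envy. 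This avoids both the reliance on Fact~\ref{fact:RBisEF1} and the (mildly delicate) tie-breaking claim that the sequential run actually outputs $\allocation$; it even pins down explicitly which good must be removed. What the paper's reduction buys in exchange is a tighter connection to the RB-sequence machinery used elsewhere in the paper, at the cost of being less elementary. One cosmetic point: the hypothesis of the lemma (each of the $n$ agents holds exactly one good from each of the $q$ blocks of size $n$) already forces $m=qn$, so your closing remark about padding is not an additional assumption you need.
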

This lemma can be proven noticing that an allocation satisfying such a condition is attainable by running a sequential algorithm with an appropriate RB picking sequence.

We next build our approximation algorithm, Algorithm~\ref{algo:EF1transfOrd}, around the conditions of \Cref{lemma:identicalRB}. 
The algorithm will take as input an
allocation $\allocation$ and transform it into an $\EF1$ allocation $\allocation'$ while maintaining suitable properties w.r.t.\ the social impact of $\allocation$. Specifically, Algorithm~\ref{algo:EF1transfOrd} proceeds into $q$ rounds and 
at round $h\in [q]$ assigns the goods in $\goods^h$, one for each agent. Firstly, for each agent $i$ with $\goods^h \cap A_i \neq \emptyset$, it puts in $A_i'$ the best good according to the social impact of $i$ among the ones in $\goods^h \cap A_i$, that is, a good in $\arg\max_{g\in \goods^h \cap A_i} s_i(g)$. 
The remaining agents with $\goods^h \cap A_i= \emptyset$ will receive one of the remaining goods in $\goods^h$. 
A more formal description is presented in Algorithm~\ref{algo:EF1transfOrd}.
Notice, the algorithm solely depends on the goods ordering and not on the agents' valuations.

\begin{algorithm}[htb!]
	\SetNoFillComment
	\DontPrintSemicolon
	\KwIn{An allocation $\allocation$ for a fair division instance with ordered valuations and with $m= q\cdot n $}
	\KwOut{An $\EF1$ allocation $\allocation'$}
	\For{$k=1, \dots, q$}{ 
		$X\gets \agents$ and $Y\gets \goods^k$\\
		\For{$i=1, \dots, n$}{
			\If{$\goods^k \cap A_i\neq \emptyset$}{
				$g \gets g\in \arg\max_{g\in \goods^k \cap A_i}s_i(g)$\\
				$X= X \setminus \set{i}$ and 
				$Y= Y \setminus \set{g}$\\
				$A'_i= A'_i \cup \set{g}$\\
			}
		}
		Each good in $Y$ is given to a distinct agent in $X$\\
	}
	\KwRet{$\allocation'$}\\
	\caption{Transformation into an $\EF1$ allocation\label{algo:EF1transfOrd}}
\end{algorithm}

\begin{restatable}{proposition}{genApproxAlgIdentical}\label{prop:genApproxAlgIdentical}
	If $\allocation$ is the input of Algorithm~\ref{algo:EF1transfOrd} and $\allocation'$ is the output, then, $\allocation'$ is $\EF1$. Moreover, $n \cdot s_i(A_i')\geq s_i(A_i)$.
\end{restatable}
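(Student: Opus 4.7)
The plan is to prove the two claims separately, relying on \Cref{lemma:identicalRB} for the $\EF1$ part and on a simple pigeonhole/averaging argument for the approximation part.

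For the $\EF1$ claim, my strategy is to show that $\allocation'$ fits the hypothesis of \Cref{lemma:identicalRB}, namely that every agent receives exactly one good from each block $\goods^h$. First I would fix a round $h \in [q]$ and argue by a bookkeeping check that the multisets $X$ and $Y$ maintained by the inner loop always have equal cardinality. Initially $|X|=|Y|=n$, and each iteration of the inner loop that triggers the \emph{if} branch removes one element from each. Hence after the inner loop, $|X|=|Y|$ equals the number of agents $i$ with $\goods^h\cap A_i=\emptyset$, so the final line ``each good in $Y$ is given to a distinct agent in $X$'' is well defined and assigns exactly one good of $\goods^h$ to every remaining agent. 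Combined with the fact that the \emph{if} branch gives exactly one good of $\goods^h$ to every agent with $\goods^h \cap A_i \neq \emptyset$, this shows that in $\allocation'$ every agent holds precisely one good per block, and \Cref{lemma:identicalRB} immediately yields $\EF1$.

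For the bound $n\cdot s_i(A_i') \geq s_i(A_i)$, I would argue round by round. Fix an agent $i$ and a round $h$, and let $g^{(h)}_i$ denote the good assigned to $i$ by the algorithm during round $h$. If $\goods^h \cap A_i \neq \emptyset$, then by the definition of $g$ in the \emph{if} branch we have
\[
s_i(g^{(h)}_i) \;=\; \max_{g \in \goods^h \cap A_i} s_i(g) \;\geq\; \frac{1}{|\goods^h \cap A_i|}\, s_i(\goods^h \cap A_i) \;\geq\; \frac{1}{n}\, s_i(\goods^h \cap A_i),
\]
using additivity and $|\goods^h \cap A_i|\leq n$. If instead $\goods^h \cap A_i = \emptyset$, the right-hand side is $0$ and the inequality is trivial (social impacts are non-negative). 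Summing over $h\in[q]$ and using additivity of $s_i$ together with the fact that the blocks $\goods^h$ partition $\goods$,
\[
s_i(A_i') \;=\; \sum_{h=1}^{q} s_i(g^{(h)}_i) \;\geq\; \frac{1}{n}\sum_{h=1}^{q} s_i(\goods^h \cap A_i) \;=\; \frac{1}{n}\, s_i(A_i),
\]
which is the desired inequality.

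The only mildly delicate point I anticipate is the cardinality bookkeeping in the inner loop (ensuring the final distribution step is well defined); the remaining arguments are routine consequences of additivity and the max-selection rule, and the $\EF1$ property is obtained essentially for free from \Cref{lemma:identicalRB}.
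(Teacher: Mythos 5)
Your proposal is correct and follows essentially the same route as the paper: the $\EF1$ part is obtained by verifying that every agent receives exactly one good per block and invoking \Cref{lemma:identicalRB}, and the approximation bound is the same per-round averaging argument (the selected good is the $s_i$-maximum of $A_i\cap\goods^h$, a set of size at most $n$), summed over rounds via additivity. Your explicit bookkeeping that $|X|=|Y|$ after the inner loop is a detail the paper leaves implicit, but it does not change the argument.
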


\begin{proof}[Sketch]
	Algorithm~\ref{algo:EF1transfOrd}, by design, computes an allocation satisfying the hypothesis of \Cref{lemma:identicalRB}, and thus, $\allocation'$ is $\EF1$.
	
	About the approximation, at each round, agent $i$ receives the best, according to $s_i$, remaining goods of $A_i$ while the other agents receive at most one good from $A_i$. Thus, the social impact of the good agent $i$ receives is, in the worst case, an $n$-approximation to the goods of $A_i$ allocated in this round. Summing up for all rounds the thesis follows. 
\end{proof}

\Cref{thm:EF1approxOrd} follows by \Cref{prop:genApproxAlgIdentical} as the utilitarian welfare is the sum of the agents' utilities. \Cref{prop:genApproxAlgIdentical} has broader implications on the egalitarian, Nash welfare, and in general on $p$-means, where an $n$-approximation can be guaranteed as well.


\subsection{An $n$-approx.\ subject to $\EFX$ for identical agents}
In this section, we further assume that agents have identical valuations and  strengthen \Cref{thm:EF1approxOrd} as follows:

\begin{restatable}{theorem}{approxIdentical}\label{thm:approxIdentical}
	For identical valuations, there exists a poly-time algorithm computing an $\EFX$ allocation that is an $n$-approximation of $\opt$. 
\end{restatable}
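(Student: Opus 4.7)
Since identical valuations are ordered, I would reuse the block decomposition $\goods^1,\dots,\goods^q$ (with each $\goods^h$ containing the next $n$ goods in the sorted order) that underlies Algorithm~\ref{algo:EF1transfOrd}, and feed the algorithm the allocation $\OPT$ maximizing $\SW$, which can be computed by assigning each good $g$ to an agent in $\argmax_{i} s_i(g)$ as observed at the beginning of \Cref{sec:impossibilityMaxUt}. However, Algorithm~\ref{algo:EF1transfOrd} only yields $\EF1$, not $\EFX$: a simple two-agent instance with values $(10,1,1,1)$ shows that allocating one good per block may leave intra-block spreads large enough to violate the $\EFX$ condition against the bundle holding the large good.

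To upgrade to $\EFX$, I would prove a refinement of \Cref{lemma:identicalRB} tailored to identical $v$. Concretely, under identical valuations, $\EFX$ is equivalent to the condition $v(A_j)-v(A_i) \leq \min_{g\in A_j} v(g)$ for every pair $i\neq j$; I would show that if each agent receives one good from each block and the rank pattern across consecutive blocks satisfies a balancing rule (so that the agent receiving the top-ranked good in $\goods^h$ received a lower-ranked good in $\goods^{h-1}$, and symmetrically), then the pairwise spread telescopes and is always dominated by the minimum-valued good in the larger bundle. This yields a polynomial-time checkable sufficient condition for $\EFX$ strictly stronger than the one in \Cref{lemma:identicalRB}.

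Next, I would modify the within-block selection of Algorithm~\ref{algo:EF1transfOrd}: rather than each agent greedily picking her best $s_i$-good from $A_i\cap\goods^h$, the algorithm computes a bipartite matching in $\goods^h$ between agents and goods that (i) respects the rank-balancing condition above, and (ii) still assigns to each agent $i$ a good $g$ with $s_i(g) \geq \tfrac{1}{n}\,s_i(A_i^{\OPT}\cap\goods^h)$. Summing over all $q$ blocks yields $s_i(A_i')\geq \tfrac{1}{n}\,s_i(A_i^{\OPT})$ for every $i$, and hence $\SW(\allocation')\geq \tfrac{1}{n}\,\opt$, matching the lower bound of \Cref{cor:LButEFX}.

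The main obstacle is the joint feasibility of (i) and (ii): the rank-balancing constraint restricts which agent may receive which position within a block, and it is not obvious this is compatible with each agent recouping a $1/n$-fraction of her own $s_i$-mass in that block. I would tackle this via a Hall/deficiency argument, exploiting the freedom that under identical $v$ the block slots are exchangeable and that each agent contributes at most $n$ goods to a block in $\OPT$, so the matching polytope is rich enough to admit a solution satisfying both constraints; this step is where the bulk of the proof effort lies.
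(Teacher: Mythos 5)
There is a genuine gap, and your own counterexample already exposes it. In the two-agent instance with identical values $(10,1,1,1)$ the blocks are $\goods^1=\set{10,1}$ and $\goods^2=\set{1,1}$, so \emph{every} allocation in which each agent receives exactly one good per block gives one agent value $11$ and the other value $2$, with spread $9$ while the minimum good in the richer bundle is worth $1$. No rank-balancing rule across consecutive blocks can change this, because within each block the choices are forced (or value-equivalent); the only $\EFX$ allocations here have bundles of unequal cardinality, e.g.\ $\set{10}$ versus $\set{1,1,1}$. Hence the refinement of \Cref{lemma:identicalRB} you plan to prove -- ``one good per block plus a balancing condition implies $\EFX$'' -- is false, and the entire strategy of upgrading the block structure of Algorithm~\ref{algo:EF1transfOrd} to $\EFX$ cannot work: $\EFX$ under identical valuations is intrinsically incompatible with cardinality-balanced bundles, so the subsequent Hall/deficiency step has nothing to rescue.

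The idea you are missing is to decouple the fairness structure from the social-impact optimization, which is what the paper does. For identical valuations an $\EFX$ allocation $\allocation=(A_1,\dots,A_n)$ can be computed in polynomial time (Barman et al.), and -- crucially -- since all agents share the valuation $v$, the $\EFX$ property depends only on the partition into bundles, not on which agent holds which bundle; any permutation of the bundles is still $\EFX$. One can therefore keep the partition fixed and choose the assignment of bundles to agents by a maximum-weight matching, where the weight of the edge between agent $i$ and bundle $A_j$ is $s_i(A_j)$. An averaging argument over the $n$ cyclic shifts of the bundles shows that the maximum matching has social welfare at least $\frac{1}{n}\sum_{i}\sum_{j} s_i(A_j)=\frac{1}{n}\sum_i s_i(\goods)\geq \frac{1}{n}\opt$, yielding the $n$-approximation without ever touching $\OPT$ or the block decomposition. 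Note also that this bound is against $\sum_i s_i(\goods)$, a quantity that upper-bounds $\opt$ regardless of how $\OPT$ distributes goods, which is why no per-block accounting of $s_i(A_i^{\OPT}\cap\goods^h)$ is needed.
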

\begin{proof}[Sketch]
	We recall that, for identical valuations, an $\EFX$ allocation can be computed in polynomial time \cite{barman2018greedy}.
	Let $\allocation$ be such an allocation. Since agents have identical valuations, any permutation of bundles of $\allocation$ remains $\EFX$. We can therefore compute a maximum matching between agents and bundles of $\allocation$, where the weight of an agent-bundle edge is the social impact the agent would have by receiving that bundle. 
	Let us assume w.l.o.g.\ agent $i$ is matched to bundle $A_i$ in the maximum matching. 
	Clearly, any other matching cannot provide a better social impact. 
	Consider now, $n$ distinct matchings such that, for each $i$ and $j$, $i$ gets $A_j$ in exactly one of these matchings. Such matchings always exist as they can be obtained by shifting bundles among agents.
	Therefore, $n \cdot \sum_{i=1}^n s_i(A_i) \geq  \sum_{i=1}^n \sum_{j=1}^n s_i(A_j)\geq \opt$, where the last inequality holds as $s_i(\goods)= \sum_{j=1}^n s_i(A_j)$.
	
	In conclusion, the allocation corresponding to such maximum matching guarantees an $n$-approximation to $\opt$.
\end{proof}

\subsection{An $O(n^2)$-approximation subject to $\EF1$}\label{sec:apprximationMaxUt}
In this section, we tackle the problem of finding an $\EF1$ allocation with good approximation to $\opt$ for additive valuations.
The paper~\cite{bu2023fair} provides an easy $m$-approximation to $\opt$ subject to $\EF1$. Their algorithm, that we call \textsc{SimpleEF1Approx}, proceeds as follows: Finds a good $g^*$ of maximum social impact, that is, $\max_i s_i(g^*)\geq \max_{i,g} s_i(g)$.
Then, it assigns $g^*$ to the agent realizing the maximum social impact for it, say $i^*$. The remaining goods are allocated in an RR fashion w.r.t.\ an ordering where $i^*$ is the last picking. Clearly, optimally allocating only one good cannot guarantee an approximation to $\opt$ better than $m$.
Considering that the number of goods might be significantly larger $n$, we now improve the approximation as follows.

\begin{theorem}\label{thm:EF1n2}
	There exists a poly-time algorithm computing an $\EF1$ allocation providing a $O(n^2)$-approximation to $\opt$.
\end{theorem}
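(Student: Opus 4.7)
The plan is to refine the single-good guarantee of \textsc{SimpleEF1Approx} by protecting a small but welfare-heavy bundle, obtained through a two-level pigeonhole on an $\OPT$ allocation, and then completing the allocation via a recursively balanced picking sequence so that Fact~\ref{fact:RBisEF1}-style reasoning yields $\EF1$.

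First, I would compute $\OPT$ in polynomial time (each good assigned to an $s_i$-maximizer, as noted at the start of \Cref{sec:impossibilityMaxUt}) and denote the bundles by $A_i^{\OPT}$. An averaging argument over agents selects $i^* \in \argmax_i s_i(A_i^{\OPT})$ with $s_{i^*}(A_{i^*}^{\OPT}) \geq \opt/n$. Applying the same averaging inside $A_{i^*}^{\OPT}$, the top $\lceil |A_{i^*}^{\OPT}|/n \rceil$ goods of that bundle by $s_{i^*}$ form a subset $B^* \subseteq A_{i^*}^{\OPT}$ with $|B^*| \leq \lceil m/n \rceil$ and $s_{i^*}(B^*) \geq s_{i^*}(A_{i^*}^{\OPT})/n \geq \opt/n^2$. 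This ``fat'' bundle is what the algorithm will route to $i^*$.

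The allocation is then built in $|B^*|$ interleaved rounds: in round $k$, agent $i^*$ absorbs the $k$-th good of $B^*$ (in $s_{i^*}$-decreasing order), and every other agent $j$ picks her $v_j$-best good from $\goods \setminus B^*$ in a fixed cyclic order; any goods still unallocated after these rounds are distributed by continuing the round-robin over all $n$ agents on the residual pool. Since each agent appears exactly once per round, the induced picking sequence is recursively balanced. The welfare lower bound is immediate: $i^*$'s bundle alone contributes $s_{i^*}(B^*) \geq \opt/n^2$, which already gives the $O(n^2)$-approximation.

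The hard part will be establishing $\EF1$, because $i^*$'s picks inside $B^*$ are forced to follow the $s_{i^*}$-order rather than the $v_{i^*}$-greedy order presumed by Fact~\ref{fact:RBisEF1}. For pairs not involving $i^*$, the standard round-robin analysis on the pool $\goods \setminus B^*$ (and on the residual pool afterwards) carries over unchanged. For pairs involving $i^*$, the envy of an opponent $j$ towards $A_{i^*} = B^*$ is controlled by removing the $v_j$-maximal good from $B^*$: the remaining $|B^*|-1$ goods are charged one-to-one against $j$'s earliest $v_j$-best picks from $\goods \setminus B^*$, exploiting the recursively balanced structure; conversely, $i^*$'s envy of any $A_j$ is handled by removing $j$'s first pick in the standard round-robin manner. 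If an irreducible $\EF1$ violation survives at the boundary (for instance, the last round when $m$ is not a multiple of $n$), a concluding application of \cycleElimination\ restores $\EF1$ while preserving the welfare lower bound, since the $s_{i^*}$-mass of $B^*$ has already been credited to the objective and subsequent cycle trades can be restricted so as not to involve $B^*$.
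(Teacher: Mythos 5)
There is a genuine gap, and it sits exactly where you flag ``the hard part'': the $\EF1$ analysis for pairs involving $i^*$ does not go through in either direction, and the fallback you invoke cannot repair it. (i) Opponents' envy towards $i^*$: your one-to-one charging of $B^*\setminus\set{g}$ against $j$'s earliest picks requires $v_j(\text{pick})\geq v_j(b)$ for the charged good $b\in B^*$, but $j$ only ever picks from $\goods\setminus B^*$, a pool that need not contain anything comparable to $B^*$ in $j$'s eyes. Concretely, with two agents, ten goods, $\OPT_{i^*}=\goods$, $B^*$ the top five goods by $s_{i^*}$, and $v_j$ equal to $1$ on $B^*$ and $0$ elsewhere, agent $j$ ends with value $0$ while $v_j(B^*\setminus\set{g})=4$ for every $g$, so $j$ is not $\EF1$ (indeed not even $\EF4$). (ii) $i^*$'s own envy: $i^*$'s picks in the first $\modulus{B^*}$ rounds are forced goods chosen by $s_{i^*}$, not $v_{i^*}$-greedy picks, so the standard ``remove the first pick of $j$'' round-robin argument does not apply; if $v_{i^*}$ is $0$ on $B^*$ and $1$ elsewhere, $i^*$ falls behind by roughly $\modulus{B^*}$ goods' worth of value, a deficit the residual round-robin never closes. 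Finally, the concluding appeal to \cycleElimination\ is not a fix: that procedure preserves $\EF1$ when the allocation is already $\EF1$ (or is built by adding goods only to unenvied agents), it does not convert a non-$\EF1$ allocation into an $\EF1$ one; and if a cycle through $i^*$ must be resolved, the trade moves $B^*$ to an agent whose social impact for it may be $0$, destroying the $\opt/n^2$ welfare bound -- you cannot simply ``restrict'' trades away from $B^*$ and still eliminate all cycles.

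The missing idea is precisely how to reconcile handing one agent a large protected bundle with $\EF1$ for everyone else, and this is what the paper's case distinction is for. It splits $\opt=\delta_1+\delta_2$, where $\delta_1$ is the value of the $n$ socially best goods of each $\OPT_i$. When $\delta_1\geq\delta_2$, only a \emph{single} good (the globally best one) is pre-assigned, to the agent who picks last in a round-robin -- an $\EF1$-safe construction -- and since that good upper-bounds each of the $n^2$ goods in $\Delta_1$, this already yields $O(n^2)$ of $\opt$. When $\delta_2$ dominates, Algorithm~\ref{algo:Case2} partitions each $\OPT_i$ into blocks of $n$ goods and, block by block, lets all agents pick greedily along a topological order of the envy-graph, running \cycleElimination\ only between blocks, so each agent receives one good from each of her own blocks while $\EF1$ is maintained round by round (\Cref{prop:EF1Algo}), giving a $2n$-approximation via \Cref{lemma:Case2sufficient}. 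Your pigeonhole extraction of a $\opt/n^2$-heavy set $B^*$ is a reasonable instinct, but securing all of $B^*$ for one agent is exactly what $\EF1$ forbids in the worst case; some mechanism like the paper's ``one good per block, interleaved with everyone else's greedy picks'' is needed to spread the protected value across rounds.
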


To this aim, our algorithm will take into account the real optimum to compute the $\EF1$ solution.
Let $\OPT$ be the optimum for \maxUt\ that will be used by our algorithm, and $\OPT_i$ be the bundle of $i$ in $\OPT$. Recall, $\opt$ is the optimal value $\SW(\OPT)$. 
Let $m_i$ be the number of goods in $\OPT_i$. We sort the social value of such goods from the highest to the lowest, namely, $\opt^1_i\geq \opt^2_i\geq \cdots \geq \opt^{m_i}_i$, and denote the good of value $\opt^k_i$ by $o^k_i$. 
Clearly, $\opt=\sum_{i\in \agents}\sum_{k=1}^{m_i}\opt^k_i$. 
In what follows, given any $i$ such that $m_i < n$, we add dummy objects $o^k_i$ valued $0$ by all the agents and with $\opt^k_i=0$,\ for every $m_i+1 \leq k \leq n$; removing such goods at the end will not affect $\EF1$ nor the approximation. Moreover, whenever we talk about the social value of a good, we mean the social value it has in $\OPT$.

Our algorithm is based on a case distinction depending on how the critical mass of $\opt$ is distributed among the bundles $\OPT_1, \cdots, \OPT_n$. Let us denote by $\Delta_1=\set{o^k_i \, \vert \, i\in\agents, k\in [n]}$, the $n$ socially best goods in $\OPT_i$, for each $i$. Note that $\Delta_1$ contains exactly $n^2$ goods. The social value of $\Delta_1$ is given by $\delta_1=\sum_{i=1}^n \sum_{k=1}^n \opt^k_i$. Furthermore, we indicate by $\Delta_2= \goods \setminus \Delta_1$ and denote by $\delta_2$ its social value. Hence, $\opt =\delta_1 +\delta_2$. 
We distinguish the following two cases:
\begin{enumerate}[leftmargin=1.5cm]
	\item[\bf Case 1)] $\delta_1 \geq \delta_2$;
	\item[\bf Case 2)] $\delta_2 < \delta_1$.
\end{enumerate}

\paragraph{An $2n^2$-approximation for Case~1.}
This case is the easiest; it suffices to use \textsc{SimpleEF1Approx} to get a $2n^2$-approximation of $\opt$. In fact, we can allocate the good $g^*$, of highest social impact, to the agent who owns it in $\OPT$ -- this means we are correctly allocating the good of highest social impact in $\OPT$. 
The social impact of such good provides an upper bound to every good in $\Delta_1$; since  $\modulus{\Delta_1} = n^2$, the outcome of \textsc{SimpleEF1Approx} is trivially an $n^2$-approximation of $\delta_1$. 
Being $\delta_1 \geq \frac{\delta_1+\delta_2}{2}= \frac{\opt}{2}$, this ensures an $2n^2$-approximation.

Despite the easy approach and analysis, Case~1 constitutes the bottleneck for the approximation of $\opt$.
In what follows, we focus on Case~2 which will require a more interesting algorithm providing a $2n$-approximation of $\opt$.

\paragraph{An $2 n$-approximation for Case~2.}
In this case, $\exists i\in\agents$ such that $\modulus{\OPT_i}>n$, otherwise, $\Delta_2=\emptyset$ and  $ \delta_2=0\leq \delta_1$.
Hence, we partition the goods of each $\OPT_i$ into groups of $n$ goods as long as it is possible. More precisely, if $m_i= k_i\cdot n + r_i$, with $r_i<n$, we create a group $B_i^k$ containing the goods $o_i^{(k-1)n +1}, \dots, o_i^{kn}$, for $k\in [k_i]$. The remaining goods, if any, are stored in $B_i^{k_i+1}$, otherwise, we let $B_i^{k_i+1}=\emptyset$. In the social impact perspective, any good in $B_i^k$ values at least as much as any good in $B_i^{k+1}$ when assigned to agent $i$. This observation is the key ingredient of the following.

\begin{restatable}{lemma}{Casetwosufficient}\label{lemma:Case2sufficient}
	In Case~2, any allocation that assigns to each agent $i\in \agents$ one good in $B_i^k$, for each $k=1, \cdots, k_i$, guarantees an $2n$-approximation to $\opt$.
\end{restatable}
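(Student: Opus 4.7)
The plan is to lower-bound the social welfare of any allocation $\allocation$ of the stated form by $\delta_2/n$, and then invoke the Case~2 hypothesis (which forces $\delta_2\geq \opt/2$) to turn this into the desired $2n$-approximation of $\opt$.

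First I would fix an agent $i\in\agents$ and, for each $k\in[k_i]$, let $g_i^k\in B_i^k$ denote the good that $\allocation$ assigns to $i$ from the $k$-th block. The key fact, built into the way the blocks were constructed, is that the goods $o_i^1,o_i^2,\ldots$ are sorted in non-increasing order of $s_i$, so every good in $B_i^k$ has $s_i$-value at least as large as every good in $B_i^{k+1}$; in particular,
$$s_i(g_i^k)\;\geq\; \max_{g\in B_i^{k+1}} s_i(g)\;\geq\; \frac{s_i(B_i^{k+1})}{n},$$
where the last step uses $|B_i^{k+1}|\leq n$. Next I would sum this inequality over $k=1,\ldots,k_i$, so that the right-hand side telescopes:
$$s_i(A_i)\;\geq\; \sum_{k=1}^{k_i} s_i(g_i^k)\;\geq\; \frac{1}{n}\sum_{k=2}^{k_i+1} s_i(B_i^k)\;=\;\frac{1}{n}\,s_i\bigl(\OPT_i\setminus B_i^1\bigr),$$
and then sum over all agents to get $\SW(\allocation)\geq \delta_2/n$, using $\Delta_2=\bigcup_i(\OPT_i\setminus B_i^1)$ together with the disjointness of the bundles $\OPT_i$. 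Combined with $\delta_2\geq \delta_1$, i.e.\ $\delta_2\geq \opt/2$, this yields $\SW(\allocation)\geq \opt/(2n)$, as required.

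The only subtle point---more a bookkeeping check than a real obstacle---is the possibly incomplete tail block $B_i^{k_i+1}$ of cardinality $r_i<n$. I would verify that the bound $s_i(g_i^{k_i})\geq s_i(B_i^{k_i+1})/n$ still holds: this is immediate because $r_i\leq n$, and it remains correct when $B_i^{k_i+1}=\emptyset$, where both sides vanish. Beyond this, the argument is a clean block-wise telescoping that crucially does not depend on which particular good of $B_i^k$ the allocation happens to pick, matching the generality asserted in the statement.
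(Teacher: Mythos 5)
Your proposal is correct and follows essentially the same argument as the paper: one good from each block $B_i^k$ dominates the $n$ goods of $B_i^{k+1}$, so $n\cdot s_i(A_i)\geq \sum_{k=2}^{k_i+1}s_i(B_i^k)$, and summing over agents together with $\delta_2\geq \opt/2$ (the intended reading of Case~2, despite the typo in the case labels) gives the $2n$-approximation. Your explicit handling of the possibly incomplete tail block is a minor bookkeeping refinement of the same proof.
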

\Cref{lemma:Case2sufficient} determines a sufficient condition for an allocation to be a $2n$-approximation to $\opt$. Algorithm~\ref{algo:Case2} will provide an $\EF1$ solution satisfying such a condition. This algorithm uses the same approach of \cite{bu2023fair} of partitioning $\goods$ in packages of $n$ goods. Their purpose was to achieve $\EF1$ both in the allocator and the agents perspectives, under the assumption that for all $i$ $s_i=s$ for some allocator valuation $s$. With our approach, we use a different partition of the goods into packages and our results hold for arbitrary $s_i$ and concerns the approximation facor of the resulting $\EF1$ allocation.

\begin{algorithm}[htb!]
	\SetNoFillComment
	\DontPrintSemicolon
	\KwIn{$\OPT$, $\goods$, $\agents$, $\set{v_i}_{i\in \agents}$, $\set{s_i}_{i\in \agents}$}
	\KwOut{An allocation $\allocation= (A_1, \cdots, A_n)$}
	\tcc{Phase 1: Preprocessing}
	Partition $\OPT_i$ in $B_i^1, \dots, B_i^{k_i+1}$ for each $i\in \agents$\\
	$\allocation \gets (\emptyset, \cdots, \emptyset)$\\
	\tcc{Phase 2: Compute a partial allocation satisfying \Cref{lemma:Case2sufficient}}
	Build the envy-graph $G$ corresponding to $\allocation$\\
	\For{$i\in\agents$ and $k=1, \cdots, k_i$}{ 
		$\sigma \gets \topOrd(G)$\\
		\For{$j=1, \dots , n$}{ \label{line:forLoopRound}
			$g\gets g\in \arg\max_{g\in B_i^k} v_{\sigma(j)}(g)$ \\
			$A_{\sigma(j)} \gets A_{\sigma(j)} \cup \set{g}$ and 
			$B_i^k \gets B_i^k  \setminus \set{g}$\\
		}
		Build the envy-graph $G$ corresponding to $\allocation$\\
		\While{$\topOrd(G) = False$}{
			$\allocation \gets \cycleElimination(\allocation, G)$ \\
			Build the envy-graph $G$ corresponding to $\allocation$
		}
	}
	\tcc{Phase~3: Allocate remaining goods w/o violating~\Cref{lemma:Case2sufficient}}
	$\mathcal{B}\gets \cup_{i\in\agents} B_i^{k_i+1}$\\
	Use envy-cycle elimination to allocate the goods $\mathcal{B}$\\
	\KwRet{$\allocation$}\\
	\caption{$O(n)$-approx. for \maxUt\ in Case~2\label{algo:Case2}}
\end{algorithm}

Informally, Algorithm~\ref{algo:Case2} takes an optimum allocation $\OPT$ of $\maxUt$ and in the {\em preprocessing} ({\em Phase 1}) partitions the bundle $\OPT_i$ into $B_i^1, \dots, B_i^{k_i+1}$ as we previously described. 
In {\em Phase 2}, the algorithm proceeds into rounds and ensures that, at the beginning of each round, the envy-graph corresponding to the current allocation is acyclic. We can therefore compute a topological order of the agents -- an agent in this ordering may envy only agents coming afterward -- which will be used in the current round to assign goods. 
In particular, the algorithm allocates in an RR fashion the goods of a certain $B_i^k$ with respect to the just computed topological order, for $k\leq k_i$. 
So no two goods in $B_i^k$ are assigned to the same agent. At the end of the round, the $\EF1$ condition is finally satisfied (interestingly, this might not be true in the middle of a round) and all the existing cycles in the envy-graph are eliminated. 
Finally, in {Phase 3}, the remaining goods $\mathcal{B}$ are allocated via the standard envy-cycle elimination algorithm.

The algorithm makes use of the $\topOrd(G)$ procedure which determines whether a directed graph $G$ admits a topological order, and if so, it outputs one.

In order to prove the correctness of the algorithm, we first observe that the output of Algorithm \ref{algo:Case2} satisfies \Cref{lemma:Case2sufficient}. In a nutshell, since it assigns to each bundle exactly one good from $B_i^k$, for every $i\in\agents$ and $k\in [k_i]$, no matter how bundles are subsequently shuffled, this property remains true. Therefore, a linear approximation immediately follows. 
In the remaining, we show the $\EF1$ condition. 
\begin{proposition}\label{prop:EF1Algo}
	{Phase 2} of Algorithm \ref{algo:Case2} is well-defined and produces an $\EF1$ allocation.
\end{proposition}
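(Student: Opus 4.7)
The plan is to establish the invariant that after every iteration of the outer for-loop in Phase~2 (and after the subsequent cycle-elimination subroutine), the current allocation $\allocation$ is $\EF1$ and its envy-graph is acyclic. The base case is immediate, since Phase~2 begins with the empty allocation. For the inductive step, acyclicity guarantees that $\topOrd(G)$ returns a valid ordering $\sigma=(j_1,\dots,j_n)$; because $|B_i^k|=n$ and each iteration of the inner loop hands out one distinct good, that step is well-defined and the inner loop can be executed to completion. It then only remains to argue that (i) the inner round of picks preserves $\EF1$ and (ii) the while-loop on \cycleElimination\ terminates and restores acyclicity.

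For (i), fix agents $j_p$ and $j_q$ and split on the topological order. If $p<q$, then $j_p$ may currently envy $j_q$, but by the inductive hypothesis there is some $g^{*}\in A_{j_q}$ with $v_{j_p}(A_{j_p})\geq v_{j_p}(A_{j_q}\setminus\{g^{*}\})$; moreover, $j_p$ picks before $j_q$, hence $v_{j_p}(g_p)\geq v_{j_p}(g_q)$. Adding these two inequalities gives $v_{j_p}(A_{j_p}\cup\{g_p\})\geq v_{j_p}((A_{j_q}\cup\{g_q\})\setminus\{g^{*}\})$, so $\EF1$ is witnessed by removing $g^{*}$. If instead $p>q$, then the topological order forbids an edge $j_p\to j_q$, so $j_p$ does not envy $j_q$ at all: $v_{j_p}(A_{j_p})\geq v_{j_p}(A_{j_q})$, and therefore $v_{j_p}(A_{j_p}\cup\{g_p\})\geq v_{j_p}((A_{j_q}\cup\{g_q\})\setminus\{g_q\})$, witnessing $\EF1$ by removing the freshly picked good $g_q$.

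For (ii), I would invoke the standard properties of \cycleElimination: it preserves $\EF1$ (since swapping bundles around an envy-cycle only replaces each involved bundle with one that the receiver already valued at least as much as her own), and it terminates because each swap strictly increases the potential $\sum_{i\in\agents}v_i(A_i)$, which is bounded above by $\sum_{i\in\agents}v_i(\goods)$ and takes only finitely many values along the run. Hence the while-loop halts and re-establishes acyclicity, closing the induction and yielding both well-definedness and the $\EF1$ property of the allocation at the end of Phase~2. The delicate point—which I expect to be the main obstacle to writing the proof cleanly—is the case analysis above, and in particular the observation that in the ``backward'' case $p>q$ the witness good must be $j_q$'s new pick $g_q$ rather than the inductive witness $g^{*}$; picking the wrong good to remove breaks the algebra and obscures why a single synchronous round (as opposed to the one-good-at-a-time scheme of Lipton et al.) still maintains $\EF1$.
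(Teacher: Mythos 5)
Your proof is correct and follows essentially the same route as the paper's: induction over rounds, using the topological order (no envy toward earlier-ordered agents, witness is the freshly picked good) and the pick order plus the inductive $\EF1$ witness for later-ordered agents. The only difference is that you also spell out the standard termination and $\EF1$-preservation properties of \cycleElimination, which the paper simply invokes as known facts from the preliminaries.
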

\begin{proof}
	We show at the end of each round the current allocation is $\EF1$.
	At the first round, the agents will have one good each and the allocation is $\EF1$.
	
	Assume we are in a generic round of {\em Phase 2} and the partial allocation $\allocation$ at the beginning of the round is $\EF1$. We have obtained $\allocation$ after repeatedly running, in the previous round, the procedure $\cycleElimination$ as long as the envy-graph is not acyclic. Hence, $\topOrd$ will now correctly output a topological order $\sigma$ of the agents.
	
	We first notice that the {\bf for} loop starting at line \ref{line:forLoopRound} assigns one good $x_i$ to each agent $i$ in a RR fashion with respect to the permutation $\sigma$. Hence, for each agent $\sigma(j)$, $v_{\sigma(j)}(x_{\sigma(j)})\geq v_{\sigma(j)}(x_{\sigma(j')})$ for each $j'>j$. Moreover, being $\sigma$ a topological ordering of the agents for the partial allocation $\allocation$, we have $v_{\sigma(j)}(A_{\sigma(j)})\geq v_{\sigma(j)}(A_{\sigma(j')})$ for each $j'<j$.
	At the end of the {\bf for} loop, agent $\sigma(j)$ owns the bundle $A'_{\sigma(j)} = A_{\sigma(j)} \cup \set{x_{\sigma(j)}}$. We next show that this bundle makes $\sigma(j)$ $\EF1$ at the end of the for loop. 
	
	Consider $j'<j$; being  $v_{\sigma(j)}(A'_{\sigma(j)} )\geq v_{\sigma(j)}(A_{\sigma(j)})$ and
	\begin{align*}
		v_{\sigma(j)}(A_{\sigma(j)})
		&\geq v_{\sigma(j)}(A_{\sigma(j')}) = v_{\sigma(j)}(A'_{\sigma(j')}\setminus \set{x_{\sigma(j')}}),
	\end{align*}
	agent $\sigma(j)$ turns out to be $\EF1$ w.r.t.\ $\sigma(j')$, as it suffices to remove the last inserted good from $A'_{\sigma(j')}$ to make $\sigma(j)$ $\EF$. 
	
	Consider now $j'>j$; since $\allocation$ is $\EF1$, $\exists g\in A_{\sigma(j')}$ s.t.
	\begin{align*}
		v_{\sigma(j)}(A_{\sigma(j)})\geq v_{\sigma(j)}(A_{\sigma(j')}\setminus \set{g});
	\end{align*}
	on the other hand $v_{\sigma(j)}(x_{\sigma(j)})\geq v_{\sigma(j)}(x_{\sigma(j')})$ and therefore  
	\begin{align*}
		v_{\sigma(j)}(A'_{\sigma(j)})\geq v_{\sigma(j)}(A'_{\sigma(j')}\setminus \set{g})
	\end{align*}
	making agent $\sigma(j)$ $\EF1$ towards $\sigma(j')$ also in this case.
\end{proof}

Since at the end of {Phase 2} the current allocation $\allocation$ is $\EF1$, so will be the output of Algorithm~\ref{algo:Case2} as {Phase 3} simply applies the envy-cycle elimination algorithm. 

The algorithms for Cases 1 and 2 imply Theorem~\ref{thm:EF1n2}.

\subsection{An $n$-approx.\ subject to weaker fairness criteria}
The main obstacle in obtaining a linear approximation is the allocation of the best good $o_i^1$, for each $i$, while ensuring fairness. In fact, this would provide a $O(n)$-approximation of $\delta_1$ and, hence, to $\opt$ in Case 1.  This can be circumvented by relaxing the envy condition on the agents' side requiring $\EF2$. In fact, assume we remove the goods $o_1^1, \cdots, o_n^1$, and run Algorithm \ref{algo:Case2} in the resulting instance. This provides a partial $\EF1$ allocation $\allocation$ that we are able to show to be $n$-approx.\ to $\delta_2$.
By simply assigning to each agent $i$ the corresponding $o_i^1$ we trivially get an $\EF2$ allocation. On the other hand, assigning each  $o_i^1$ to agent $i$ guarantees an $n$-approx. of $\delta_1$. These facts together imply:
\begin{restatable}{theorem}{EFtwoapprox}\label{thm:EF2approx}
	There exists a poly-time algorithm computing an $\EF2$ allocation providing a $n$-approximation to $\opt$.
\end{restatable}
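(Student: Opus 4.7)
The plan is to follow the recipe sketched just before the statement: set aside the single socially best good from each bundle of $\OPT$, apply \Cref{algo:Case2} to what remains, and then reinject the set-aside goods to their owners. Concretely, I would first compute an optimum allocation $\OPT$ of \maxUt\ and pick, for every agent~$i$, her personal maximiser $o_i^1\in\OPT_i$. Let $\mathcal{O}^1=\{o_1^1,\dots,o_n^1\}$ and consider the reduced instance with item set $\goods\setminus\mathcal{O}^1$ and ``optimum'' $\OPT_i\setminus\{o_i^1\}$. On this reduced instance I would run \Cref{algo:Case2} with a block partition that keeps the blocks $B_i^k$ with $k\geq 2$ intact. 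Let $\allocation$ be the resulting $\EF1$ allocation of $\goods\setminus\mathcal{O}^1$ (whose $\EF1$ property is ensured by \Cref{prop:EF1Algo}), and set $\allocation'=(A_1\cup\{o_1^1\},\dots,A_n\cup\{o_n^1\})$.

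For the $\EF2$ guarantee I would fix a pair $(i,j)$ and exploit the $\EF1$ property of $\allocation$. If $A_j=\emptyset$, then $A'_j=\{o_j^1\}$ and removing that single good leaves nothing, so $\EF1$ (and a fortiori $\EF2$) holds for $i$ against $j$. Otherwise there exists $g\in A_j$ with $v_i(A_i)\geq v_i(A_j\setminus\{g\})$, and additivity together with $v_i\geq 0$ gives
\[
v_i(A'_i)\;\geq\;v_i(A_i)\;\geq\;v_i(A_j\setminus\{g\})\;=\;v_i\bigl(A'_j\setminus\{g,o_j^1\}\bigr),
\]
so removing the pair $\{g,o_j^1\}$ witnesses $\EF2$.

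For the approximation I would split $\SW(\allocation')=\SW(\allocation)+\sum_{i\in\agents} s_i(o_i^1)$ and estimate each piece. Since $o_i^1$ is socially best in $\OPT_i$, $\opt_i^1\geq\opt_i^k$ for every $k\in[n]$, and summing yields $n\sum_i\opt_i^1\geq\sum_i\sum_{k=1}^n\opt_i^k=\delta_1$, so the set-aside goods alone contribute at least $\delta_1/n$. For $\SW(\allocation)$ I would reuse the block argument behind \Cref{lemma:Case2sufficient}: by construction every agent $i$ receives a good $g_{i,k}$ from each of her own blocks, and since $s_i$ is non-increasing inside a block, $s_i(g_{i,k})\geq \opt_i^{kn}$; combining with the elementary inequality $n\cdot \opt_i^{kn}\geq \sum_{j=kn+1}^{(k+1)n}\opt_i^j$ and telescoping over $k$ yields
\[
n\cdot \SW(\allocation)\;\geq\; \sum_{i\in\agents}\sum_{k=1}^{k_i}n\cdot \opt_i^{kn}\;\geq\; \sum_{i\in\agents}\sum_{j=n+1}^{m_i}\opt_i^j\;=\;\delta_2.
\]
Putting both bounds together and using $\opt=\delta_1+\delta_2$ gives $\SW(\allocation')\geq\opt/n$.

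The main obstacle I anticipate is operational rather than conceptual: adapting \Cref{algo:Case2} so that the shrunk first block $B_i^1\setminus\{o_i^1\}$ of size $n-1$ is handled without breaking the round-robin structure that yields $\EF1$ (e.g.\ by padding with a zero-valued dummy, or by using a shifted partition that keeps all blocks of size $n$), and checking that Phase~3 of \Cref{algo:Case2} (envy-cycle elimination on the residual goods) preserves the $\EF1$ property without invalidating the block-wise welfare estimate. Once these details are settled, the arithmetic displayed above delivers $\SW(\allocation)\geq \delta_2/n$, and the theorem follows.
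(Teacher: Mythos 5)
Your proposal is correct and follows essentially the same route as the paper: set aside each $o_i^1$, run Algorithm~\ref{algo:Case2} on the reduced instance to get a partial $\EF1$ allocation approximating $\delta_2$ within a factor $n$, then return $o_i^1$ to agent $i$, which yields $\EF2$ and an $n$-approximation of $\delta_1$, hence of $\opt=\delta_1+\delta_2$. The operational detail you flag is resolved exactly as in your second suggested fix -- the paper uses the shifted blocks $C_i^k=\set{o_i^{(k-1)n+2},\dots,o_i^{kn+1}}$ of size $n$, whose goods dominate those of $B_i^{k+1}$, so your welfare arithmetic goes through unchanged.
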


The second relaxation we consider is epistemic $\EF1$ which implies $\PROP1$.

In the remainder of this section, we show the following.
\begin{restatable}{theorem}{approxEpistemic}\label{thm:approxEpistemic}
	There exists a poly-time algorithm computing an epistemic $\EF1$ and $\PROP1$ allocation providing a $n$-approximation to $\opt$. 
\end{restatable}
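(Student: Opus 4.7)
The plan is to recycle the construction behind Theorem~\ref{thm:EF2approx} and upgrade the fairness analysis. As in that theorem, I would pre-allocate to each agent $i$ the good $o_i^1$, and then invoke Algorithm~\ref{algo:Case2} on the sub-instance $\goods\setminus\{o_1^1,\dots,o_n^1\}$ to obtain a partial allocation $(B_1,\dots,B_n)$ that is $\EF1$ on this sub-instance and satisfies (the analogue of) \Cref{lemma:Case2sufficient}. The final allocation is $A_i=\{o_i^1\}\cup B_i$. The approximation guarantee is inherited from the same two-piece decomposition used for Theorem~\ref{thm:EF2approx}: by averaging, $\sum_i s_i(o_i^1)\geq \delta_1/n$, and the partial allocation captures $\delta_2/n$ from the remaining goods, so $\SW(\mathcal{A})\geq(\delta_1+\delta_2)/n=\opt/n$.

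The substantive work is the fairness verification. Since epistemic $\EF1$ implies $\PROP1$, both claims of the theorem will follow once I exhibit, for each agent $i$, a certificate $\mathcal{A}^{(i)}$ with $A^{(i)}_i=A_i$ that is $\EF1$ from $i$'s perspective. The plan is to build the certificate by partitioning $\goods\setminus A_i$ among the other $n-1$ agents through a round-robin performed using the single valuation $v_i$: by the classical analysis this guarantees that the resulting bundles differ from each other (in $i$'s view) by at most one good, so it suffices to show that $v_i(A_i)$ dominates the $v_i$-average of $\goods\setminus A_i$ up to one good, i.e.\ that $\mathcal{A}$ is $\PROP1$. For $\PROP1$, I would start from the sub-instance $\PROP1$ guarantee inherited by $B_i$ from its $\EF1$ status, namely $v_i(B_i\cup\{g'\})\geq v_i(\goods')/n$ for some $g'\in\goods'\setminus B_i$, and then use the pre-allocated $o_j^1$'s ($j\neq i$) as candidate removed goods, exploiting $\max_{g\notin A_i}v_i(g)\geq \max_{j\neq i}v_i(o_j^1)\geq \frac{1}{n-1}\sum_{j\neq i}v_i(o_j^1)$ to absorb the gap between $v_i(\goods)/n$ and $v_i(\goods')/n$.

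The main obstacle I foresee is precisely this compensation argument. The good $o_i^1$ is picked to maximize $s_i$, not $v_i$, so $v_i(o_i^1)$ could be arbitrarily small while some $v_i(o_j^1)$, $j\neq i$, is large; this pulls $v_i(\goods)/n$ up without making $v_i(A_i)$ any larger than $v_i(B_i)$. The averaging bound above is what tips the balance, but it has to be combined carefully with the specific $\PROP1$ witness produced by Algorithm~\ref{algo:Case2} on the sub-instance, so that the single good removed in the $\PROP1$ condition of the full instance can be chosen to simultaneously cover the $o_j^1$-gap and the sub-instance $\PROP1$-gap. Once the $\PROP1$ inequality $v_i(A_i\cup\{g\})\geq v_i(\goods)/n$ is in place, the round-robin certificate described above yields the epistemic $\EF1$ guarantee, completing the proof.
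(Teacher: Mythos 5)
Your approximation argument is fine, but the fairness part has a genuine logical gap: the reduction of epistemic $\EF1$ to $\PROP1$ is invalid. Epistemic $\EF1$ lies strictly between $\EF1$ and $\PROP1$, and the round-robin certificate you describe needs $v_i(A_i)\geq \frac{1}{n-1}v_i(\goods\setminus A_i)$ up to a \emph{fraction} of a good, which is strictly stronger than $v_i(A_i\cup\{g\})\geq v_i(\goods)/n$. Concretely, your construction can output an allocation that is $\PROP1$ but not epistemic $\EF1$: take $n=2$, goods $x,y,z$ with $v_1(x)=0$, $v_1(y)=10$, $v_1(z)=12$, $s_1(x)=100$, $s_1(y)=s_1(z)=0$, $s_2(x)=0$, $s_2(y)=s_2(z)=1$ (and $v_2\equiv 1$). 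Then $\OPT=(\{x\},\{y,z\})$, $o_1^1=x$, $o_2^1=y$; on the sub-instance $\{z\}$ Phase~2 of Algorithm~\ref{algo:Case2} is idle and Phase~3 may legitimately give $z$ to agent $2$, yielding $A_1=\{x\}$, $A_2=\{y,z\}$. This is $\EF2$ and $\PROP1$ for agent $1$ ($0+12\geq 11$), but the only possible epistemic certificate hands $\{y,z\}$ to agent $2$, and $0< v_1(\{y,z\})-\max_{g\in\{y,z\}}v_1(g)=10$, so epistemic $\EF1$ fails. So neither does your certificate construction follow from $\PROP1$, nor does the $\{o_i^1\}\cup B_i$ allocation satisfy the theorem's fairness guarantee in general: the pre-allocated goods are chosen by social impact, and agent $i$ can be short by essentially two large goods in her own valuation, which $\EF2$ tolerates but epistemic $\EF1$ does not.

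The secondary step you flag as the ``main obstacle'' is also not closed: the sub-instance guarantee only gives $v_i(B_i)+v_i(g')\geq v_i(\goods')/n$ (one added good), and covering in addition the gap $\frac{1}{n}\sum_j v_i(o_j^1)-v_i(o_i^1)$ requires a second good, while $\PROP1$ budgets only one; your averaging bound cancels one deficit only at the price of reopening the other. The paper takes an entirely different route that sidesteps both issues: it proves a sufficient condition for epistemic $\EF1$ based on each agent's \emph{own} ranking (\Cref{lemma:sufficientEpistemic}: it is enough that $A_i$ contains one good from each of $i$'s blocks $\goods_i^1,\dots,\goods_i^q$), then computes a maximum-weight perfect matching in the $n$-regular bipartite graph between goods and $q$ copies of each agent (copy $i_h$ joined to $\goods_i^h$, weights $s_i$); every perfect matching satisfies the lemma, and the decomposition of the $n$-regular graph into $n$ perfect matchings (Hall) shows the maximum matching is an $n$-approximation of $\sum_i s_i(\goods)\geq \opt$. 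If you want to salvage your plan, you would need a fairness condition on the combined allocation at least as strong as the hypothesis of \Cref{lemma:sufficientEpistemic}, not $\PROP1$.
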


To this aim, we first provide a sufficient condition for epistemic $\EF1$. Recall that $\goods^h_i= \set{g_i^{(h-1)n+1}, \dots, g_i^{hn}}$ and $q$ is the largest integer such that $m\geq qn$.

\begin{restatable}{lemma}{sufficientEpistemic}\label{lemma:sufficientEpistemic}
	Let $\allocation$ be an allocation where each agent gets at least $q$ goods, and there exists $\set{x_1, \cdots, x_q}\subseteq A_i$ s.t.\ $x_h\in \goods_i^h$, for each $h\in[q]$. Then, $\allocation$ is epistemic $\EF1$.
\end{restatable}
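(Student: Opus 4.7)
The plan is to exhibit, for each agent $i$, an explicit epistemic $\EF1$ certificate $\allocation^*$ with $A_i^* = A_i$. Since only the partition of $R := \goods \setminus A_i$ into the other $n-1$ bundles matters for $\EF1$ in $i$'s view, I have complete freedom to redistribute $R$ from scratch. I would sort $R$ in decreasing order of $v_i$ as $r_1, r_2, \dots, r_{|R|}$ (so $v_i(r_1) \geq v_i(r_2) \geq \cdots$) and deal the goods round-robin to the other agents, giving the $k$-th of them, for $k \in [n-1]$, the bundle $B_k := \{r_k, r_{k+(n-1)}, r_{k+2(n-1)}, \dots\}$. The goal then reduces to showing, for every such $k$, that $v_i(A_i) \geq v_i(B_k) - v_i(r_k)$, i.e.\ that removing the top element of $B_k$ in $i$'s view drops its value to at most $v_i(A_i)$.

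For the lower bound on $v_i(A_i)$, the assumption $x_h \in \goods_i^h$ immediately gives $v_i(x_h) \geq v_i(g_i^{hn})$, and hence $v_i(A_i) \geq \sum_{h=1}^q v_i(g_i^{hn})$. The matching upper bound on $v_i(B_k) - v_i(r_k)$ is the crux, and I expect it to be the main technical obstacle: I claim $v_i(r_{k+\ell(n-1)}) \leq v_i(g_i^{\ell n})$ for every $\ell \geq 1$ and every $k \in [n-1]$, or equivalently, that the global rank of $r_{k+\ell(n-1)}$ in $i$'s ordering of $\goods$ is at least $\ell n$. I would argue this by a case split on the block $\goods_i^{h'}$ containing $r_{k+\ell(n-1)}$. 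If $h' \geq \ell+1$, the rank is automatically at least $(h'-1)n+1 \geq \ell n+1$. If $h' \leq \ell$, I combine two observations: (i) the rank equals $k + \ell(n-1) + \alpha$, where $\alpha$ is the number of goods of $A_i$ strictly above $r_{k+\ell(n-1)}$; and (ii) the hypothesis supplies one good from each $\goods_i^{h''}$ with $h'' < h'$, all lying strictly above $r_{k+\ell(n-1)}$ in $i$'s ordering, so $\alpha \geq h'-1$. A short arithmetic check then rules out $h' < \ell$ and pins $h' = \ell$ with rank exactly $\ell n$. I would also verify that $\ell$ never exceeds $q$ in this sum, which follows from $|R| \leq (q+1)(n-1)$, a consequence of $|A_i| \geq q$ and $m < (q+1)n$.

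Assembling the two bounds yields
\[
v_i(B_k) - v_i(r_k) \;=\; \sum_{\ell \geq 1} v_i(r_{k+\ell(n-1)}) \;\leq\; \sum_{\ell=1}^{q} v_i(g_i^{\ell n}) \;\leq\; v_i(A_i),
\]
so $\allocation^*$ is $\EF1$ from $i$'s perspective. Since the construction can be carried out for every agent $i$, the allocation $\allocation$ is epistemic $\EF1$.
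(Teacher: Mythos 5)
Your proof is correct, but it follows a genuinely different route from the paper's. The paper builds the epistemic certificate for agent $i$ in two steps: it first runs a recursively balanced picking sequence on the ordered instance induced by $i$'s ranking, with $i$ picking last in every round, so that $i$ ends up with exactly the block-minima $y_h=g_i^{hn}$, $h\in[q]$, and the resulting allocation is $\EF1$ from $i$'s perspective by Fact~\ref{fact:RBisEF1}; it then swaps each $x_h$ with $y_h$ and finally moves any leftover goods of $A_i$ into $i$'s bundle, observing that every modification can only increase $v_i$ of $i$'s own bundle and decrease $i$'s value for the others, so $\EF1$ in $i$'s view is preserved and $A_i^*=A_i$. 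You instead construct the certificate in one shot -- keep $A_i$ and deal $R=\goods\setminus A_i$ round-robin, in decreasing $v_i$-order, to the other $n-1$ agents -- and verify $\EF1$ from $i$'s view by a direct counting argument: the lower bound $v_i(A_i)\geq\sum_{h=1}^q v_i(g_i^{hn})$ is the same benchmark the paper uses, while your upper bound $v_i(r_{k+\ell(n-1)})\leq v_i(g_i^{\ell n})$, proved via the case analysis on the block $h'$ containing $r_{k+\ell(n-1)}$ and the padding of its rank by the goods $x_{h''}$, $h''<h'$, is sound, as is the bound $\lvert R\rvert\leq(q+1)(n-1)$ ensuring $\ell\leq q$. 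In effect you re-derive, for this specific structure, the portion of the ``RB sequences give $\EF1$'' guarantee that the paper simply cites, which makes your argument self-contained and purely elementary at the price of heavier bookkeeping, whereas the paper's swap argument is shorter because it piggybacks on the known fact. Two cosmetic points: state that the sorted order of $R$ is taken consistent with the fixed global order $g_i^1,\dots,g_i^m$ (so that the rank identity $k+\ell(n-1)+\alpha$ is exact under ties), and note that the case $n=1$ is trivial since there is no other agent to envy.
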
 
Thanks to $\Cref{lemma:sufficientEpistemic}$ we are ready to show \Cref{thm:approxEpistemic}.

\begin{proof}[Proof of \Cref{thm:approxEpistemic} -- Sketch] 
	Assume w.l.o.g\ $m=q\cdot n$, for some $q\in\mathbb{N}$; if not, we introduce an appropriate number of dummy goods evaluated $0$ and having a social impact of $0$.
	
	We express our problem as the one of finding a maximum weighted perfect matching on a bipartite graph. On one side, we have $m$ goods, on the other, we have $q$ copies of the agents. Specifically, for each agent $i$, we have the copies $i_1, \cdots, i_{q}$. The $h$-th copy of agent $i$ is connected to all goods in $\goods_i^h$,  with edge weights equaling the social impact of the agent for the corresponding good. 
	{ ${\mathbf\EF1}$.}
	By construction, any perfect matching corresponds to an allocation satisfying conditions of~\Cref{lemma:sufficientEpistemic}. 
	Therefore, a maximum weighted perfect matching in the bipartite graph guarantees epistemic $\EF1$ and hence $\PROP1$. 
	
	{ Approximation.}
	The constructed bipartite graph is $n$-regular. 
	As a direct consequence of Hall's Theorem,
	there exist $n$ disjoint perfect matchings that partition the set of edges of the graph. Therefore, the sum of the social values of these $n$ matchings (that equals $\sum_i s_i(\goods)$) is an upper bound to $\opt$.
	On the other hand, being the computed matching of maximum weight, it is also an $n$-approximation to the sum of the values of the $n$  matchings, and hence of $\opt$.
\end{proof}


\section{Socially Aware Agents}
So far we have shown that, being agents unconscious of their social imprint, it is not possible to guarantee a good social impact while pursuing fairness. 
Due to the growing attention to social issues, we might imagine scenarios where agents do take into account their and others' social impact when establishing their opinion on the outcome. 
To introduce some sort of social awareness, a first natural attempt might be to allow envy only towards agents having a worse social impact, that is, agents guaranteeing a better social impact are allowed to have a better bundle. Slightly more formally, we may say that an allocation $\allocation$ is envy-free (in a social aware sense) if for each agent $i$ one of the following conditions holds:
\begin{align*}
	v_i(A_i)\geq v_i(A_j)  && or && s_i(A_i) < s_j(A_j)\, .
\end{align*}
On the positive side, according to this definition, the envy an agent may have towards another agent is subjective (depends on her valuation), while who might be envied is objective (depends on the impact on society).
Therefore,
if an agent envies another this envy is eliminated because the envied agent has objectively a better impact on society. However, this first notion can be arbitrarily inefficient.
\begin{example}
	Assume there are two agents and one item. Both agents value $1$ the item; agent $1$ has a social impact $1$ while agent $2$ has social impact  $\epsilon$ for receiving the item. In this scenario, $\opt=1$. Consider the allocation $\allocation$ where the item is given to agent $2$. This provides a social welfare of $\epsilon$. Moreover, agent $1$ does envy agent $2$. However, the social impact of $1$ is higher than the one of agent $2$. If agents $1$ would decide to remove the envy towards agent $2$ on the sole basis of their current social impact, the allocation $\allocation$ would be considered fair while, for the sake of the social welfare, it is reasonable that agent $1$ envies agent $2$.
\end{example}

The main issue with this first attempt at defining a socially aware notion is that there is no comparison in what potentially could guarantee the possibly envious agent. Roughly speaking: It is foolish to believe we cannot envy someone just because she/he has a better impact than us if, in reality, we would make a better contribution if we were in her/his place. This is the idea behind the following definition.

\begin{definition}[Socially aware $\EF$]
	We say an allocation $\allocation$ is {\em socially aware envy-free} ($\SEF$) if for each pair of agents agents $i, j$ one of the following conditions holds:
	\begin{align*}
		v_i(A_i)\geq v_i(A_j)  && or && s_i(A_j) < s_j(A_j)\, .
	\end{align*}
	In turn, a socially aware agent $i$ envies $j$ ($i$ sa-envies $j$, in short) if $v_i(A_i)<v_i(A_j) $ and $s_i(A_j) \geq s_j(A_j)$.
\end{definition}

Notice that, if $s_i(g)=0$ for each $g\in\goods$ and $i\in\agents$ -- which means, agents have no social impact -- $\SEF \equiv \EF$; this implies $\SEF$ may not exist as well. We, therefore, relax this notion in the ``up to one good'' sense.

\begin{definition}[$\SEF$ up to one good]
	An allocation $\allocation$ is $\SEF$ {\em up to one good} ($\SEF1$) if for each pair of agents $i, j$, such that $A_j\neq \emptyset$, one of the following conditions holds:
	\begin{itemize}
		\item  there exists $g\in A_j$ such that $v_i(A_i)\geq v_i(A_j\setminus\set{g}) $, or 
		\item $s_i(A_j) < s_j(A_j)$.
	\end{itemize}
\end{definition}

We notice we could have defined the ``up to one good'' relaxation saying $i$ envies $j$ if, for each $g\in A_j$, $s_i(A_j\setminus \set{g}) \geq s_j(A_j)$ and $v_i(A_i)<v_i(A_j\setminus \set{g})$; however, this is weaker than the $\SEF1$ we just defined.

Surprisingly, we are able to show the following result.
\begin{restatable}{theorem}{sEFonemaxUt}\label{thm:sEF1maxUt}
	There exists a poly-time algorithm computing an $\SEF1$ allocation which is optimal for \maxUt.
\end{restatable}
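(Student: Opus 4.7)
My plan is to design a restricted round-robin algorithm whose selection rule forces optimality for \maxUt, and then to argue that the only pairs for which the social-dominance clause of $\SEF1$ can fail are precisely those for which the restriction still leaves enough flexibility to certify $\EF1$. The starting observation is that, by additivity, an allocation is optimal for \maxUt\ if and only if each good $g$ is assigned to some agent in $M(g)\coloneqq \arg\max_{k\in\agents} s_k(g)$. Consequently, in any such allocation and for every pair $(i,j)$, one has $s_i(A_j)=\sum_{g\in A_j}s_i(g)\le \sum_{g\in A_j}\max_k s_k(g)=s_j(A_j)$, with equality if and only if $i\in M(g)$ for every $g\in A_j$. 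Hence the second clause of $\SEF1$ can fail only on ``indistinguishable'' pairs $(i,j)$ where $i\in M(g)$ for all $g\in A_j$; on such pairs it remains to establish the first clause.

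The algorithm is a round-robin along an arbitrary cyclic ordering $\pi=(\pi(1),\dots,\pi(n),\pi(1),\dots)$ in which the current picker $i$ selects the remaining good $g$ maximizing $v_i(g)$ among those with $i\in M(g)$, and is skipped when no such good exists. Optimality is immediate, since every allocated good ends up with an agent in its $M(g)$, giving $\SW(\allocation)=\sum_g \max_k s_k(g)=\opt$. For fairness, fix a pair $(i,j)$ with $s_i(A_j)\ge s_j(A_j)$, and let $r_1<\dots<r_{q_j}$ be the rounds in which $j$ picks (with picks $j_{r_1},\dots,j_{r_{q_j}}$). The central structural claim is that $i$ is never skipped in round $r_k$ for $k<q_j$: at $i$'s turn in round $r_k$ the subsequent picks $j_{r_{k+1}},\dots,j_{r_{q_j}}$ of $j$ are all still available, and since $i\in M(g)$ for every $g\in A_j$, each of them is an admissible choice for $i$, so $i$ cannot be skipped.

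From this claim the usual round-robin pairing yields the desired bounds. If $i$ precedes $j$ in $\pi$, then at $i$'s (earlier) turn in round $r_k$ the good $j_{r_k}$ is still available and admissible, so the greedy rule gives $v_i(i_{r_k})\ge v_i(j_{r_k})$ for every $k$, and summing yields $v_i(A_i)\ge v_i(A_j)$ (envy-freeness). If $j$ precedes $i$, the same reasoning applied to $j_{r_{k+1}}$ gives $v_i(i_{r_k})\ge v_i(j_{r_{k+1}})$ for $k=1,\dots,q_j-1$; summing yields $v_i(A_i)\ge v_i(A_j\setminus\{j_{r_1}\})$, which is $\EF1$. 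Either way the first clause of $\SEF1$ holds for $(i,j)$. The main obstacle I anticipate is exactly the interplay between the $M(g)$ restriction and the round-robin bookkeeping, since arbitrary skipping could otherwise destroy the standard EF1 pairing argument; the resolution is the structural claim above, which says that the hypothesis ``$i\in M(g)$ for every $g\in A_j$'' itself prevents $i$ from being skipped in any round that matters for the comparison with $j$.
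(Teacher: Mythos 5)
Your proof is correct, but it takes a genuinely different route from the paper's. Both arguments hinge on the same key fact---in any \maxUt-optimal allocation $s_i(g)\le s_j(g)$ for every $g\in A_j$, so the social clause of $\SEF1$ can fail for a pair $(i,j)$ only when $s_i(g)=s_j(g)=\max_k s_k(g)$ for every $g\in A_j$ (the paper records this as \Cref{prop:SEF1equalImpactItem})---but they exploit it differently. The paper allocates goods one at a time to an agent of maximum social impact, breaking ties by a topological order of the socially-aware envy graph (eliminating sa-envy cycles, which by that observation leaves the welfare unchanged), and proves by induction that every \emph{partial} allocation remains $\SEF1$. You instead run a restricted round-robin in which agent $i$ greedily picks, according to $v_i$, among the goods for which she attains the maximum social impact, and you analyze only the final allocation: whenever the social clause fails for $(i,j)$, all of $j$'s picks are admissible for $i$, so $i$ is never skipped in the rounds that matter, and the classical round-robin pairing gives full $\EF$ toward agents picked later in the order and $\EF1$ (dropping $j$'s first pick) toward agents picked earlier. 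Your approach dispenses with envy-graph maintenance and \cycleElimination\ entirely and even yields plain envy-freeness in one direction, at the price of the slightly more delicate skipping analysis; the paper's approach buys the invariant that every intermediate allocation is already $\SEF1$. One small point you should make explicit is completeness: every good does get allocated (hence $\SW(\allocation)=\sum_{g\in\goods}\max_k s_k(g)=\opt$), because while a good $g$ is unallocated every agent in $\argmax_k s_k(g)$ has an admissible good at each of her turns and therefore picks one; since there are finitely many goods, $g$ itself is eventually taken.
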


\begin{proof}[Sketch]
	The theorem is a consequence of the following observation: 
	Given an optimal allocation $\OPT$, if $i$ sa-envies $j$, then $s_i(g) = s_j(g)$, for each $g\in \OPT_j$. 
	
	Our algorithm sequentially assigns the goods in $\goods$ optimally while maintaining $\EF1$. Assume at a certain step the current partial allocation $\allocation$ is $\EF1$ and $g$ has to be assigned to an agent $i^* \in \argmax_i s_i(g)$, to guarantee optimality. We need to make sure that the new allocation is $\EF1$. If not, there exists $j$ who would not be $\SEF1$ toward $i^*$, and hence not $\SEF$, in the $\allocation'$ obtained from $\allocation$ by allocating $g$ to $i^*$.
	$\allocation'$ is optimal, for the currently allocated goods, and our previous observation imply 1) $s_j(g)= s_{i^*}(g)$ and 2) $s_{j}(A_{i^*})= s_{i^*}(A_{i^*})$, as the social impact of $j$ is the same of $i$ for all goods in $A'_{i^*}= A_{i^*} \cup \set{g}$. 1) means that the only agents that might envy $i^*$ after allocating $g$ are the ones in $\argmax_i s_i(g)$, so, we should choose the $i^*$ who is not envied by any other agent in $\argmax_i s_i(g)$. However, it is possible that such an agent does not exist -- because there are no sources among $\argmax_i s_i(g)$ in the sa-envy-graph. 2) establishes that if there exists a cycle of sa-envy in $\argmax_i s_i(g)$ applying \cycleElimination\ does not affect the social impact. Hence, we can apply \cycleElimination\ until we find a $i^*\in \argmax_i s_i(g)$ who is not envied. This shows there always exists a way to sequentially assign goods optimally while maintaining $\EF1$.
\end{proof}


\section{Discussion}
We have investigated the fair division model when the underlying allocation problem has an impact on society.

Our work also paves the way for a large number of future challenges. One might study the scenario where goods are chores for the agents; in such case, not all our results naturally extend. In fact, the envy-cycle elimination no longer guarantees $\EF1$, and a different approach is needed~\cite{bhaskar2021approximate}. However, this cannot be immediately translated into an approximation algorithm for \maxUt.
Other appealing directions include the study of social welfare functions other than the utilitarian.

We have also introduced social awareness.  One may extend this definition to encode different levels of awareness by introducing a parameter $\alpha$ establishing how altruistic agents are. If $\alpha = 0$ we would have $\EF1$, which means, agents completely ignore their impact, while for $\alpha=1$, the notion would coincide with $\SEF1$. We have seen that an approximation to $\opt$ better than linear is not possible when requiring $\EF1$, while an outcome that is optimal and $\SEF1$ always exists. We wonder if there exists an explicit connection between the parameter $\alpha$ and the attainable approximation. 

\section{Acknowledgments}
This work was partially supported by:  
the PNRR MIUR project FAIR - Future AI Research (PE00000013), Spoke 9 - Green-aware AI; 
the PNRR MIUR project VITALITY (ECS00000041), Spoke 2 -  Advanced Space Technologies and Research Alliance (ASTRA);
the European Union - Next Generation EU under the Italian National Recovery and Resilience Plan (NRRP), Mission 4, Component 2, Investment 1.3, CUP J33C22002880001, partnership on ``Telecommunications of the Future''(PE00000001 - program ``RESTART''), project MoVeOver/SCHEDULE (``Smart interseCtions witH connEcteD and aUtonomous vehicLEs'', CUP J33C22002880001);
and
Gruppo Nazionale Calcolo Scientifico-Istituto Nazionale di Alta Matematica
(GNCS-INdAM).

\bibliographystyle{plain}
\bibliography{bibliography}

\begin{thebibliography}{10}

\bibitem{amanatidis2023fair}
Georgios Amanatidis, Haris Aziz, Georgios Birmpas, Aris Filos-Ratsikas, Bo~Li,
  Herv{\'e} Moulin, Alexandros~A Voudouris, and Xiaowei Wu.
\newblock Fair division of indivisible goods: Recent progress and open
  questions.
\newblock {\em Artificial Intelligence}, 322:103965, 2023.

\bibitem{aziz2018knowledge}
Haris Aziz, Sylvain Bouveret, Ioannis Caragiannis, Ira Giagkousi, and
  J{\'{e}}r{\^{o}}me Lang.
\newblock Knowledge, fairness, and social constraints.
\newblock In {\em Proceedings of the 32nd {AAAI} Conference on Artificial
  Intelligence ({AAAI})}, pages 4638--4645, 2018.

\bibitem{azizFreeman2023best}
Haris Aziz, Rupert Freeman, Nisarg Shah, and Rohit Vaish.
\newblock Best of both worlds: Ex ante and ex post fairness in resource
  allocation.
\newblock {\em Operations Research}, 2023.

\bibitem{aziz2023best}
Haris Aziz, Aditya Ganguly, and Evi Micha.
\newblock Best of both worlds fairness under entitlements.
\newblock In {\em Proceedings of the 2023 International Conference on
  Autonomous Agents and Multiagent Systems}, pages 941--948, 2023.

\bibitem{azizConstrainedRR}
Haris Aziz, Xin Huang, Nicholas Mattei, and Erel Segal{-}Halevi.
\newblock The constrained round robin algorithm for fair and efficient
  allocation.
\newblock {\em CoRR}, abs/1908.00161, 2019.

\bibitem{aziz2020polynomial}
Haris Aziz, Herv{\'e} Moulin, and Fedor Sandomirskiy.
\newblock A polynomial-time algorithm for computing a {P}areto optimal and
  almost proportional allocation.
\newblock {\em Operations Research Letters}, 48(5):573--578, 2020.

\bibitem{barman2020optimal}
Siddharth Barman, Umang Bhaskar, and Nisarg Shah.
\newblock Optimal bounds on the price of fairness for indivisible goods.
\newblock In {\em International Conference on Web and Internet Economics},
  pages 356--369. Springer, 2020.

\bibitem{barman2023guaranteeing}
Siddharth Barman, Arindam Khan, Sudarshan Shyam, and K.~V.~N. Sreenivas.
\newblock Guaranteeing envy-freeness under generalized assignment constraints.
\newblock In {\em Proceedings of the 24th {ACM} Conference on Economics and
  Computation, {EC} 2023}, pages 242--269. {ACM}, 2023.

\bibitem{barman2018greedy}
Siddharth Barman, Sanath~Kumar Krishnamurthy, and Rohit Vaish.
\newblock Greedy algorithms for maximizing nash social welfare.
\newblock In {\em Proceedings of the 17th International Conference on
  Autonomous Agents and MultiAgent Systems, {AAMAS} 2018, Stockholm, Sweden,
  July 10-15, 2018}, pages 7--13. International Foundation for Autonomous
  Agents and Multiagent Systems Richland, SC, {USA} / {ACM}, 2018.

\bibitem{bei2021price}
Xiaohui Bei, Xinhang Lu, Pasin Manurangsi, and Warut Suksompong.
\newblock The price of fairness for indivisible goods.
\newblock {\em Theory of Computing Systems}, 65:1069--1093, 2021.

\bibitem{bhaskar2021approximate}
Umang Bhaskar, A.~R. Sricharan, and Rohit Vaish.
\newblock On approximate envy-freeness for indivisible chores and mixed
  resources.
\newblock In {\em Approximation, Randomization, and Combinatorial Optimization.
  Algorithms and Techniques, {APPROX/RANDOM} 2021, August 16-18, 2021,
  University of Washington, Seattle, Washington, {USA} (Virtual Conference)},
  volume 207 of {\em LIPIcs}, pages 1:1--1:23. Schloss Dagstuhl -
  Leibniz-Zentrum f{\"{u}}r Informatik, 2021.

\bibitem{bilo2022almost}
Vittorio Bil{\`{o}}, Ioannis Caragiannis, Michele Flammini, Ayumi Igarashi,
  Gianpiero Monaco, Dominik Peters, Cosimo Vinci, and William~S. Zwicker.
\newblock Almost envy-free allocations with connected bundles.
\newblock {\em Games Econ. Behav.}, 131:197--221, 2022.

\bibitem{bredereck2022envy}
Robert Bredereck, Andrzej Kaczmarczyk, and Rolf Niedermeier.
\newblock Envy-free allocations respecting social networks.
\newblock {\em Artif. Intell.}, 305:103664, 2022.

\bibitem{bu2022complexity}
Xiaolin Bu, Zihao Li, Shengxin Liu, Jiaxin Song, and Biaoshuai Tao.
\newblock On the complexity of maximizing social welfare within fair
  allocations of indivisible goods.
\newblock {\em arXiv preprint arXiv:2205.14296}, 2022.

\bibitem{bu2023fair}
Xiaolin Bu, Zihao Li, Shengxin Liu, Jiaxin Song, and Biaoshuai Tao.
\newblock Fair division with allocator's preference.
\newblock In {\em Web and Internet Economics - 19th International Conference,
  {WINE} 2023}, volume 14413 of {\em Lecture Notes in Computer Science}, pages
  77--94. Springer, 2023.

\bibitem{budish2011combinatorial}
Eric Budish.
\newblock The combinatorial assignment problem: Approximate competitive
  equilibrium from equal incomes.
\newblock {\em Journal of Political Economy}, 119(6):1061--1103, 2011.

\bibitem{caragiannis2023new}
Ioannis Caragiannis, Jugal Garg, Nidhi Rathi, Eklavya Sharma, and Giovanna
  Varricchio.
\newblock New fairness concepts for allocating indivisible items.
\newblock In {\em Proceedings of the Thirty-Second International Joint
  Conference on Artificial Intelligence, {IJCAI} 2023}, pages 2554--2562.
  ijcai.org, 2023.

\bibitem{caragiannis2019unreasonable}
Ioannis Caragiannis, David Kurokawa, Herv{\'{e}} Moulin, Ariel~D. Procaccia,
  Nisarg Shah, and Junxing Wang.
\newblock The unreasonable fairness of maximum {N}ash welfare.
\newblock {\em {ACM} Transactions on Economics and Computation}, 7(3):1--32,
  2019.

\bibitem{chakraborty2021weighted}
Mithun Chakraborty, Ayumi Igarashi, Warut Suksompong, and Yair Zick.
\newblock Weighted envy-freeness in indivisible item allocation.
\newblock {\em ACM Transactions on Economics and Computation (TEAC)},
  9(3):1--39, 2021.

\bibitem{foley1966resource}
Duncan~Karl Foley.
\newblock {\em Resource allocation and the public sector}.
\newblock Yale University, 1966.

\bibitem{hoefer2024best}
Martin Hoefer, Marco Schmalhofer, and Giovanna Varricchio.
\newblock Best of both worlds: Agents with entitlements.
\newblock {\em Journal of Artificial Intelligence Research}, 80:559--591, 2024.

\bibitem{li2024complete}
Zihao Li, Shengxin Liu, Xinhang Lu, Biaoshuai Tao, and Yichen Tao.
\newblock A complete landscape for the price of envy-freeness.
\newblock In {\em Proceedings of the 23rd International Conference on
  Autonomous Agents and Multiagent Systems, {AAMAS} 2024}, pages 1183--1191,
  2024.

\bibitem{lipton2004approximately}
Richard~J. Lipton, Evangelos Markakis, Elchanan Mossel, and Amin Saberi.
\newblock On approximately fair allocations of indivisible goods.
\newblock In {\em Proceedings of the 5th ACM Conference on Electronic Commerce
  ({EC})}, pages 125--131. {ACM}, 2004.

\bibitem{nicosia2017price}
Gaia Nicosia, Andrea Pacifici, and Ulrich Pferschy.
\newblock Price of fairness for allocating a bounded resource.
\newblock {\em European Journal of Operational Research}, 257(3):933--943,
  2017.

\bibitem{steinhaus1948problem}
Hugo Steinhaus.
\newblock The problem of fair division.
\newblock {\em Econometrica}, 16:101--104, 1948.

\bibitem{suksompong2021constraints}
Warut Suksompong.
\newblock Constraints in fair division.
\newblock {\em ACM SIGecom Exchanges}, 19(2):46--61, 2021.

\bibitem{suksompong2024weighted}
Warut Suksompong.
\newblock Weighted fair division of indivisible items: {A} review.
\newblock {\em Information Processing Letters}, 187:106519, 2025.

\end{thebibliography}

\newpage
\appendix
\section{Fair Division with Social Impact -- Supplemental Material }
In this supplemental material, we present the pseudo-code of the algorithms, further details about the guarantees of Algorithm~\ref{algo:EF1transfOrd}, and the proofs that have not been presented in the main body of the paper. Next, you will find a dedicated section for each of these.
\subsection{Missing Algorithms}
\begin{algorithm}[H]
	\SetNoFillComment
	\DontPrintSemicolon
	\KwIn{A fair division instance $\instance=(\agents, \goods, \set{v_i}_{i\in\agents})$}
	\KwOut{An $\EF1$ allocation $\allocation$}
	$\allocation\gets (\emptyset, \dots, \emptyset)$\\
	Build the envy graph $G$ corresponding to $\allocation$\\ \tcc{At this stage the set of edges is empty.}
	\While{$\goods\neq\emptyset$}{
		\While{There is no source in $G$}{
			$\allocation \gets \cycleElimination(\allocation, G)$ \\
			Build the envy graph $G$ corresponding to $\allocation$\\
		}
		$i \gets$ a source node in $G$, $g\gets$ any good in $\goods$\\
		$A_i\gets A_i \cup\set{g}$\\
		$\goods \gets \goods \setminus \set{g}$
	}
	\KwRet{$\allocation$}
	\caption{Envy-cycle elimination \label{algo:envyCycle}}
\end{algorithm}

\begin{algorithm}[H]
	\SetNoFillComment
	\DontPrintSemicolon
	\KwIn{$\OPT$, $\goods$, $\agents$, $\set{v_i}_{i\in \agents}$, $\set{s_i}_{i\in \agents}$}
	\KwOut{An allocation $\allocation$}
	$\allocation \gets (\emptyset, \cdots, \emptyset)$\\
	\While{$\goods\neq\emptyset$}{
		Build the envy graph $G$ corresponding to $\allocation$\\
		$\sigma \gets\topOrd(G)$\\
		$g\gets$ a good in $\goods$\\
		$i\gets \argmax_{i\in\agents}s_i(g)$,
		{ties are broken w.r.t.\ $\sigma$\label{line:tie-breaking}}
		$A_i\gets A_i \cup \set{g}$,
		$\goods \gets  \goods \setminus \set{g}$
	}
	\KwRet{$\allocation$}\\
	\caption{ \maxUt\ and $\SEF1$ allocation \label{algo:sEF1}}
\end{algorithm}


\subsection{Further Discussion about the General Approximation Algorithm subject to $\EF1$ for Ordered Valuations}
Algorithm~\ref{algo:EF1transfOrd} takes an initial allocation, denoted by $\allocation$, as input and outputs a new allocation, $\allocation'$, that satisfies the $\EF1$ property. The algorithm also guarantees to each agent $i$ a bundle $A_i'$ of social value at least $1/n$-th of the social value of her original bundle $A_i$. Formally, $n \cdot s_i(A_i')\geq s_i(A_i)$, for each $i\in\agents$. As a consequence, if the considered social welfare is a homogeneous function of degree $1$ in the utility of the agents,\footnote{That is, $\SW(\allocation) = f(u_1(A_1), \cdots, u_n(A_n))$ and $f(k\cdot x_1, \cdots, k\cdot x_n)= k \cdot f(x_1, \cdots, x_n) $.} then the output of the algorithm provides an $n$-approximation to the social welfare of the input allocation $\allocation$. Notably, egalitarian, utilitarian, and Nash social welfare functions all satisfy this homogeneity property.

For this reason, \Cref{prop:genApproxAlgIdentical} has as direct consequence the following.

\begin{theorem}
	\label{cor:orderedApprox}
	For ordered instances, given an $\alpha$-approximation for $\maxSW$, then, it is possible to compute an $\EF1$ $(\alpha\cdot n)$-approximation in poly-time,  for $\textsc{SW} \in \set{\ut, \eg, \nash}$. 
\end{theorem}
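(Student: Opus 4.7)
The plan is to compose the hypothesized approximation for $\maxSW$ with Algorithm~\ref{algo:EF1transfOrd}, exploiting the pointwise guarantee $n \cdot s_i(A_i') \geq s_i(A_i)$ from \Cref{prop:genApproxAlgIdentical} together with the fact that the three social welfare functions under consideration are all homogeneous of degree $1$ in the agents' social values.

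First, I would invoke the hypothesis to compute, in polynomial time, an allocation $\allocation$ that is an $\alpha$-approximation to $\maxSW$ on the given ordered instance, i.e., $\SW(\allocation) \geq \SW(\OPT)/\alpha$. Then I would feed $\allocation$ into Algorithm~\ref{algo:EF1transfOrd} to obtain an $\EF1$ allocation $\allocation'$; by \Cref{prop:genApproxAlgIdentical}, $\allocation'$ is $\EF1$ and satisfies $s_i(A_i') \geq s_i(A_i)/n$ for every $i \in \agents$. The whole procedure runs in polynomial time since Algorithm~\ref{algo:EF1transfOrd} does.

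The next step is to lift this per-agent guarantee to a welfare guarantee for each $\SW \in \set{\ut, \eg, \nash}$. For the utilitarian case, summing $s_i(A_i') \geq s_i(A_i)/n$ over $i$ yields $\SW(\allocation') \geq \SW(\allocation)/n$. For the egalitarian case, taking the minimum over $i$ on both sides gives $\min_i s_i(A_i') \geq (\min_i s_i(A_i))/n$, hence $\SW(\allocation') \geq \SW(\allocation)/n$. For the Nash case, taking the product and then the $n$-th root gives $\bigl(\prod_i s_i(A_i')\bigr)^{1/n} \geq \bigl(\prod_i s_i(A_i)\bigr)^{1/n}/n$, again yielding $\SW(\allocation') \geq \SW(\allocation)/n$. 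In all three cases, chaining with the $\alpha$-approximation bound on $\allocation$ produces $\SW(\allocation') \geq \SW(\OPT)/(\alpha \cdot n)$, which is exactly the claimed $(\alpha \cdot n)$-approximation.

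There is essentially no hard step here: the combinatorial work is already done by \Cref{prop:genApproxAlgIdentical}, and the only thing to verify is the homogeneity of the three welfare functions, which is immediate. The only minor care point is the Nash welfare when some $s_i(A_i)=0$, but in that case the bound $\SW(\allocation') \geq \SW(\allocation)/n = 0$ holds trivially, so the argument goes through uniformly.
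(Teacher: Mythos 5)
Your proposal is correct and follows essentially the same route as the paper: it composes the assumed $\alpha$-approximation with Algorithm~\ref{algo:EF1transfOrd} and lifts the per-agent guarantee $n\cdot s_i(A_i')\geq s_i(A_i)$ of \Cref{prop:genApproxAlgIdentical} to the three welfare functions via their degree-$1$ homogeneity (and monotonicity). Your explicit case analysis for $\ut$, $\eg$, and $\nash$, including the zero-value Nash corner case, just spells out what the paper states in its general homogeneity remark.
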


\Cref{cor:orderedApprox} has several consequences for ordered instances: i) Since the problem of maximizing the utilitarian is in $\classP$, it is sufficient to allocate each good to an agent with the highest social impact for it, an $n$-approximation to \maxUt\ is indeed possible subject to $\EF1$; ii) The problem of maximizing the Nash welfare is $\classNP$-hard, however, constant approximation algorithms are known, therefore, an $O(n)$-approximation to \maxNash\ subject to $\EF1$ is also possible; iii) The problem of maximizing the egalitarian welfare turns out to be more complicated, the best-known approximation is of $O(\sqrt{n}\log^3n)$, thus our approach leads to an $\EF1$ allocation guaranteeing an  $O(n^{\frac{3}{2}}\log^3n)$-approximation.

\subsection{Missing Proofs}
\identicalRB*
\begin{proof}
	We demonstrate that the allocation $\allocation$ can be generated by a sequential algorithm using an RB picking sequence $\sigma$. \Cref{fact:RBisEF1} establishes that RB picking sequences yield $\EF1$ allocations, this will prove the $\EF1$ property of allocation $\allocation$.
	
	Recall that the goods ordering is $g^1, \dots, g^m$.
	We construct the picking sequence $\sigma$ as follows: For every good $g^k$, we set $\sigma(k)$ to be the agent who receives $g^k$ in the allocation $\allocation$. Assuming ties are broken according to the goods ordering, the sequential algorithm that uses $\sigma$ as its picking sequence will output the allocation $\allocation$.
	
	Next, we need to show that $\sigma$ is an RB sequence, which is guaranteed by our block-based partitioning. In fact, let $k\in [m]$, and let $p_i$ and $p_j$ be the number of occurrences in the prefix $(\sigma(1), \cdots, \sigma(k))$ of agents $i$ and $j$, respectively. Let $h$ be $\lfloor \frac{k}{n}\rfloor$, in the first $k$ picks, every agent received exactly one good in each block $\goods^\ell$, for each $\ell\in [h]$, and possibly one more, if $n$ does not divide $ k$.
	Therefore,
	$p_i,p_j \in \set{\lceil \frac{k}{n}\rceil, \lfloor \frac{k}{n}\rfloor}$ implying $\modulus{p_i -p_j} \leq 1$. This shows that $\sigma$ is RB.
\end{proof}

\genApproxAlgIdentical*
\begin{proof}
	The $\EF1$ condition follows by \Cref{lemma:identicalRB}.
	
	Recall we assumed $m=q\cdot n$, otherwise, we add an appropriate number of dummy goods.
	The \textbf{for} loop on line 1 in Algorithm~\ref{algo:EF1transfOrd} is run $q$ times. At each run (round) the goods of $\goods^k$ are assigned one to each agent and agent $i$ receives the best good in $A_i\cap \goods^k$, if not empty. Let us denote by $\ell_i, \dots, \ell_h$ the rounds when $i$ received a good in $A_i$. Notice that in any other round $k$, $A_i\cap \goods^k=\emptyset$, otherwise, the algorithm assigns a good in $A_i$ to $i$. Hence, $s_i(A_i)= \sum_{j=1}^h s_i(A_i\cap \goods^{\ell_j})$, where $A_i\cap \goods^{\ell_j}$ is the subset of goods in $A_i$ that have been assigned in round $\ell_j$. In round $\ell_j$, $i$ gets the best good in $A_i\cap \goods^{\ell_j}$ and $\modulus{A_i\cap \goods^{\ell_j}}\leq \modulus{\goods^{\ell_j}} = n$. So, denoted by denoted by $o_{\ell_j}$ the good $i$ receives at round $\ell_j$,
	$s_i(A_i\cap \goods^{\ell_j}) \leq n\cdot s_i(o_{\ell_j})$, as $o_{\ell_j}$ has the best social impact among remaining goods of $A_i$.
	In conclusion, being $A_i'$ the bundle $i$ receives by the algorithm,
	\begin{align*}
	s_i(A_i)= \sum_{j=1}^h s_i(A_i\cap \goods^{\ell_j}) \leq \sum_{j=1}^h n\cdot s_i(o_{\ell_j}) \leq n\cdot s_i(A_i') \, , 
	\end{align*}
	and the thesis follows.
\end{proof}

\approxIdentical*
\begin{proof}
	We recall that, for identical valuations, an $\EFX$ allocation can be computed in polynomial time.
	Let $\allocation$ be such an allocation. Observe that any permutation of bundles in $\allocation$ to the agents is $\EFX$ because agents have the same valuations. We can therefore compute a maximum matching between agents and bundles of $\allocation$, where the weight of an agent-bundle edge is the social impact the agent would have by receiving that bundle. 
	Let us assume w.l.o.g.\ that agent $i$ is matched to bundle $A_i$ in the maximum matching. 
	
	Being $s_i(\goods)= \sum_{j=1}^n s_i(A_j)$, we have $ \sum_{i=1}^n \sum_{j=1}^n s_i(A_j) \geq \opt$.  In turn, since we obtained a maximum matching by assigning $A_i$ to agent $i$, any rotation of the bundles among the agents will provide a matching with a lower social impact.  
	Specifically, let $\allocation^k$, for $k=0, \dots, n-1$ an allocation such that $A_i^k= A_{\alpha(i,k)}$, where $\alpha(i,k)= i+ k$ if $i+k \leq n$ and  $\alpha(i,k)= i \mod n + k$, otherwise. Notice that $\allocation^0 = \allocation$. The maximality of the matching corresponding to $\allocation$ implies $\sum_{i=1}^n s_i(A_i)\geq \sum_{i=1}^n s_i(A^k_i)$, for each $k=0, \dots, n-1$.
	Summing up for all such $k$ we get 
	\begin{align*}
	n \cdot \sum_{i=1}^n s_i(A_i) \geq \sum_{k=0}^{n-1}  \sum_{i=1}^n s_i(A^k_i)
	&= \sum_{i=1}^n \sum_{j=1}^n s_i(A_j)\\
	&= \sum_{i=1}^n s_i(\goods) \geq \opt
	\end{align*}
	where the first equality hold true as in each $\allocation^0, \dots , \allocation^{n-1}$ agent $i$ gets a distinct bundle of $\allocation$.
	
	In conclusion, the allocation corresponding to the maximum matching guarantees an $n$-approximation to $\opt$.
\end{proof}

\Casetwosufficient*
\begin{proof}
	Let $\allocation$ be such an allocation.
	Recall that $\OPT_i$ is the disjoint union of $\set{B_i^k}_{k=1}^{k_i+1}$. Furthermore, $\delta_2=\sum_{i\in \agents}\sum_{k=2}^{k_i+1} s_i(B_i^k)$.
	
	As already observed, $s_i(o) \geq s_i(o')$ for each $o\in B_i^k$ and $o'\in B_i^{k+1}$, for $1\leq k \leq k_i$. Since in $A_i$ agent $i$ receives at least one good in $B_i^k$, for each $k=1, \cdots, k_i$, then $n\cdot s_i(A_i) \geq \sum_{k=2}^{k_i+1} s_i(B_i^k)$. Repeating this argument for each agent we get $n \cdot s(\allocation)\geq \delta_2$. Being $\delta_2 \geq \frac{\delta_1+ \delta_2}{2} \geq \frac{\opt}{2}$, the thesis follows. 
\end{proof}

\EFtwoapprox*
\begin{proof}
	Recall, given $\OPT$ an optimal solution of \maxUt, $\opt$ is the optimal value $\SW(\OPT)$. 
	Moreover, we have denoted by $m_i$ the number of goods in $\OPT_i$ and sorted the social value of such goods from the highest to the lowest, namely, $\opt^1_i\geq \opt^2_i\geq \cdots \geq \opt^{m_i}_i$. Hence, we denoted the good of value $\opt^k_i$ by $o^k_i$. 
	Furthermore, 
	$\Delta_1=\set{o^k_i \, \vert \, i\in\agents, k\in [n]}$ which has social value $\delta_1=\sum_{i=1}^n \sum_{k=1}^n \opt^k_i$ and  $\Delta_2= \goods \setminus \Delta_1$ has social value $\delta_2$.
	
	Assigning to each agent $i$ $o_i^1$ guarantees an $n$-approximation to $\delta_1$ as $n\cdot\opt^1_i \geq \sum_{k=1}^n \opt^k_i$.
	
	Recall, $B_i^k$ containing the goods $o_i^{(k-1)n +1}, \dots, o_i^{kn}$ and $\delta_2=\sum_{i\in \agents}\sum_{k=2}^{k_i+1} s_i(B_i^k)$. 
	
	Consider now the instance $\instance$ obtained by removing $o_1^1, \dots, o_n^1$ from the set of goods and let $\mathcal{O}$ be the allocation obtained from $\OPT$ removing the same goods. $\mathcal{O}$ is an optimal solution of \maxUt\ in the $\instance$. Let us run Algorithm~\ref{algo:Case2} on the instance $\instance$ giving  $\mathcal{O}$ as the optimal solution. The algorithm will partition the bundles of $\mathcal{O}$  into blocks of $n$ goods $C_i^k=\set{o_i^{(k-1)n +2}, \dots, o_i^{kn+1}}$ as long as it is possible. Notice that the goods in $C_i^k$ are at least as good as the ones in $B_i^{k+1}$, for each $k\geq 1$.
	Let $\allocation$ be the output of the algorithm.
	We know that the algorithm will assign each agent a bundle containing exactly one good in each $C_i^k$. Therefore, $n\cdot s_i(A_i) \geq \sum_{k=2}^{k_i+1} s_i(B_i^k)$ and the $n$-approximation to $\delta_2$ follows.
	
	Moreover, $\allocation$ is $\EF1$ for $\instance$ so it is a partial $\EF1$allocation for the original instance. Assigning $o_i^1$ to each agent $i$ will make the allocation $\EF2$ and guarantees an $n$-approximation to $\delta_1$. Putting these facts together we get an $n$ approximation as well.
\end{proof}

\sufficientEpistemic*
\begin{proof}
	Let us consider an agent $i$ and show that $A_i$ makes $i$ epistemic $\EF1$. Our proof proceeds in two steps. 
	
	Step 1: Let us first consider an ordered fair division instance with the same set of goods $\goods$ and the same set of agents $\agents$. In such ordered instance, the ranking of the goods is determined by the ranking of $i$, that is, $g_i^1 \succ g_i^2 \succ \dots \succ g_i^m$. Consider a run of an RB picking sequence $\sigma$ where $i$ always picks last, i.e.\ $\sigma(h\cdot n)=i$ for each $h\in [q]$. The resulting allocation $\allocation'$ is $\EF1$ in $i$'s perspective and agent $i$ during her $h$-th pick receives the good $y_h=g_i^{hn}$ which is the worst good in $\goods_i^h$. Moreover, $\modulus{A'_i}=q$.
	
	Step 2: We transform the allocation $\allocation'$ into $\allocation^*$ where the $\EF1$ condition is maintained for $i$ and $A^*_i=A_i$, concluding our proof as $\allocation^*$ would be the epistemic $\EF1$ certificate. The transformation sequentially proceeds as follows. Set $\allocation^*= \allocation'$. For each $h\in [q]$, let $j$ be the agent such that $x_h\in A'_j$, exchange $x_h$ with $y_h$ in $\allocation^*$. This exchange does not decrease the valuation $i$ has while maintaining $\EF1$.  In fact, $y_h$, in the perspective of $i$, is the worst good in $\goods_i^h$ and $x_h\in \goods_i^h$. 
	If there are still goods in $A_i\setminus A_i^*$, move those goods to $A_i^*$, this again does not decrease $i$'s utility, possibly decreasing her valuation for others' bundles. This maintains $\allocation^*$ $\EF1$ in $i$'s perspective. Being $A^*_i=A_i$, allocation $\allocation^*$ is the epistemic $\EF1$ certificate of the allocation $\allocation$ for agent $i$.
\end{proof}

\approxEpistemic*
\begin{proof}
	Assume w.l.o.g\ $m=q\cdot n$, if not, we introduce dummy goods evaluated $0$ and having a social impact of $0$ for all agents.
	
	To prove the theorem, we express our problem as the problem of finding a maximum weighted perfect matching on a bipartite graph. On one side, we have $m$ goods, on the other, we have $q$ copies of the agents. Specifically, for each agent $i$, we have the copies $i_1, \cdots, i_{q}$. The $h$-th copy of agent $i$ is connected to all goods in $\goods_i^h$  with the edge weights representing the social impact of the agents for the respective goods. 
	
	We first notice that the constructed bipartite graph is $n$-regular. Every copy $i_h$ is connected to the $n$ goods of $\goods_i^h$; moreover, for every good $g$, since only one copy of each agent is connected to $g$, the in-degree of $g$ is $n$. 
	
	Being our bipartite graph $n$-regular, we will make use of the following well-known fact, a direct consequence of Hall's Theorem:
	
	\begin{fact}\label{fact:dRegular}
		A $d$-regular bipartite graph always contains a perfect matching.
	\end{fact}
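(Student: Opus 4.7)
The plan is a standard double-counting argument combined with Hall's marriage theorem. Let $G=(U\cup V, E)$ be a $d$-regular bipartite graph; I will first verify that $|U|=|V|$, then verify Hall's condition on the side $U$, and finally invoke Hall's theorem to extract a perfect matching.

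For the cardinality equality, I would count edges in two ways: summing degrees on the $U$ side gives $|E|=d|U|$, while summing on the $V$ side gives $|E|=d|V|$; since $d\geq 1$ (otherwise the claim is vacuous on non-empty vertex sets, or trivial on empty ones), this forces $|U|=|V|$. This is the easy warm-up step.

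For Hall's condition, fix an arbitrary $S\subseteq U$ and let $N(S)\subseteq V$ be its neighborhood. I would count the edges with at least one endpoint in $S$: on one hand, each vertex of $S$ contributes exactly $d$ such edges, so there are $d|S|$ of them; on the other hand, every such edge has its $V$-endpoint in $N(S)$, and each vertex of $N(S)$ has degree exactly $d$ in $G$, so the number of edges incident to $N(S)$ is $d|N(S)|$, which is an upper bound. Combining, $d|S|\leq d|N(S)|$, hence $|N(S)|\geq |S|$. This verifies Hall's condition for the side $U$.

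By Hall's marriage theorem applied to the bipartition with side $U$, there exists a matching $M$ that saturates $U$. Since $|U|=|V|$, any matching saturating $U$ must also saturate $V$, so $M$ is a perfect matching of $G$, which concludes the proof. There is no genuine obstacle here; the only care needed is the symmetric edge-counting step establishing $|N(S)|\geq|S|$, and the observation that Hall's condition on one side together with equal side-sizes automatically upgrades the $U$-saturating matching into a perfect matching.
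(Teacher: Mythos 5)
Your proof is correct and follows exactly the route the paper intends: the paper states this fact without proof as ``a direct consequence of Hall's Theorem,'' and your double-counting verification of Hall's condition plus the $|U|=|V|$ observation is precisely the standard way to fill in that derivation. No discrepancy to report.
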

	
	Let us observe that any perfect matching in this bipartite graph corresponds to a complete allocation for the underlying fair division instance (but the vice-versa is not true), and the weight of the matching equals the social impact of the allocation. Thus, we can interchangeably refer to matchings and allocations in the rest of this proof.
	Moreover, by the construction of the graph, for any perfect matching in the defined bipartite graph, the corresponding allocation satisfies the conditions of~\Cref{lemma:sufficientEpistemic}. 
	
	Let $\mathcal{M}^*$ be a maximum weighted perfect matching in the bipartite graph. We next show that the corresponding allocation $\allocation^*$ ensures an $n$-approximation to \maxUt.
	
	To this aim, we will make use of the following observation. By \Cref{fact:dRegular}, there exist $n$ distinct perfect matching that partition the set of edges of the bipartite graph. To illustrate this, we employ a recursive approach: starting with the bipartite graph, which is $n$-regular, we extract a perfect matching, say $\mathcal{M}^1$. Subsequently, we remove the edges of $\mathcal{M}^1$ yielding a new bipartite graph that is $(n-1)$-regular. Repeating this process, we extract $n$ distinct matchings $\mathcal{M}^1, \cdots, \mathcal{M}^n$, that correspond to $n$ different allocations $\allocation^1, \cdots, \allocation^n$. Notice, that $\cup_{h=1}^n A_i^h= \goods$, for each $i\in \agents$, as each edge incident to a copy of $i$ belongs exactly to one of the $n$ matcghings. Hence, we derive the following inequalities:
	\begin{align*}
	n \cdot \SW(\allocation^*) \underset{(1)}{\geq}\sum_{h=1}^n \SW(\allocation^h) &=\sum_{i\in \agents} \sum_{h=1}^n s_i(A_i^h)  
	\\&\underset{(2)}{=} \sum_{i\in \agents} s_i(\goods) \geq  \opt \, ,
	\end{align*}
	where, (1) holds true since $\mathcal{M}^*$ is a maximum weighted perfect matching and (2) because, by the construction of the matchings $\mathcal{M}^1, \cdots, \mathcal{M}^n$, for each $g\in \goods$ and each agent $i$, there exists $h$ such that $g\in A_i^h$.
	In conclusion, $\allocation^*$ is an $n$-approximation to \maxUt\ and the thesis follows.
\end{proof}

\subsection{Maximum Utilitarian subject to $\SEF1$}
\sEFonemaxUt*
\Cref{thm:sEF1maxUt} is the result of the following discussion.

Let us start by understanding if (and how) socially aware envy arises in a socially optimal solution.  

\begin{observation}\label{prop:SEF1equalImpactItem}
	Let $\OPT$ be an optimal allocation for \maxUt. If $i$ sa-envies $j$ then $s_i(g) = s_j(g)$, for each $g\in \OPT_j$.
\end{observation}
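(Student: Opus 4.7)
The plan is to exploit the optimality of $\OPT$ via a pointwise reassignment argument. The starting observation is that if $g\in \OPT_j$, then any single swap that moves $g$ from $j$ to another agent $i$ must not strictly increase the utilitarian welfare; otherwise $\OPT$ would not be optimal. Since valuations are additive, forming the allocation $\OPT'$ with $\OPT'_i=\OPT_i\cup\{g\}$, $\OPT'_j=\OPT_j\setminus\{g\}$, and all other bundles unchanged, the change in $\SW$ is exactly $s_i(g)-s_j(g)$. Hence $s_i(g)\le s_j(g)$ for every $i\in\agents$ and every $g\in \OPT_j$.

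I would then sum this pointwise inequality over $g\in \OPT_j$ and use additivity of $s_i$ and $s_j$ to conclude
\begin{equation*}
s_i(\OPT_j)=\sum_{g\in \OPT_j} s_i(g)\ \le\ \sum_{g\in \OPT_j} s_j(g)=s_j(\OPT_j).
\end{equation*}
On the other hand, the hypothesis that $i$ sa-envies $j$ gives, by the definition of $\SEF$, the reverse inequality $s_i(\OPT_j)\ge s_j(\OPT_j)$. Combining the two yields $s_i(\OPT_j)=s_j(\OPT_j)$.

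Finally, from the pointwise inequalities $s_i(g)\le s_j(g)$ for $g\in\OPT_j$ together with the equality of their sums, each individual inequality must be tight: $s_i(g)=s_j(g)$ for every $g\in \OPT_j$, as claimed. The proof is essentially a ``no profitable single-good swap'' argument followed by an equality-of-sums squeeze; I do not foresee a real obstacle, since additivity of the $s_k$'s makes the local swap bookkeeping clean. The only subtlety worth stating explicitly is that one does not need to move the whole bundle $\OPT_j$ at once — reallocating one good at a time already suffices and gives the pointwise conclusion that the statement requires.
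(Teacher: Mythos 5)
Your proof is correct and follows essentially the same route as the paper's: a single-good swap argument giving $s_i(g)\le s_j(g)$ for all $g\in\OPT_j$ by optimality, combined with the sa-envy inequality $s_i(\OPT_j)\ge s_j(\OPT_j)$ to force each pointwise inequality to be tight. The only cosmetic difference is that the paper phrases the final squeeze as a contradiction rather than an equality-of-sums argument.
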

\begin{proof}
	If $i$ sa-envies $j$ we have $s_i(\OPT_j) \geq s_j(\OPT_j)$.
	On the other hand, for each $g\in \OPT_j$, $s_i(g) \leq s_j(g)$ must hold because of optimality, otherwise moving $g$ to the bundle of $i$ would increase the social welfare. If there exists $g'\in A_j$ such that $s_i(g') < s_j(g')$ we get $s_i(\OPT_j) < s_j(\OPT_j)$ -- a contradiction.  
\end{proof}

\begin{corollary}\label{corollary:SEF1equalImpactBundle}
	Let $\OPT$ be an optimal allocation for \maxUt. If $i$ sa-envies $j$ then $s_i(\OPT_j) = s_j(\OPT_j)$.
\end{corollary}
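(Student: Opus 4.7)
The plan is to derive Corollary \ref{corollary:SEF1equalImpactBundle} as an immediate consequence of Observation \ref{prop:SEF1equalImpactItem} combined with the additivity of the social impact functions. Since the hypotheses of the two statements coincide (namely, $\OPT$ optimal for \maxUt\ and $i$ sa-envying $j$), I can simply invoke Observation \ref{prop:SEF1equalImpactItem} to obtain the pointwise equality $s_i(g) = s_j(g)$ for every $g \in \OPT_j$, and then aggregate over $\OPT_j$.

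First, I would note that the social impact functions $s_i$ and $s_j$ are assumed additive throughout the paper (as stated in the Preliminaries in the Social Impact paragraph). Therefore $s_i(\OPT_j) = \sum_{g \in \OPT_j} s_i(g)$ and $s_j(\OPT_j) = \sum_{g \in \OPT_j} s_j(g)$. Applying Observation \ref{prop:SEF1equalImpactItem} term by term inside these sums yields the desired bundle-level equality $s_i(\OPT_j) = s_j(\OPT_j)$.

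There is essentially no obstacle here beyond correctly citing the previous observation; the corollary is a one-line consequence. The only subtlety worth mentioning explicitly in the write-up is that the sa-envy hypothesis $s_i(\OPT_j) \geq s_j(\OPT_j)$ already gives one direction of the inequality for free, but the non-trivial reverse direction is precisely what Observation \ref{prop:SEF1equalImpactItem} supplies (and in fact it yields the stronger pointwise identity). I would keep the proof to two or three lines, making explicit the appeal to additivity so that the reader sees how the pointwise statement of the observation is promoted to the bundle-level statement of the corollary.
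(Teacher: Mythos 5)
Your proof is correct and matches the paper's intent exactly: the corollary is stated without an explicit proof precisely because it follows from Observation~\ref{prop:SEF1equalImpactItem} by summing the pointwise equalities $s_i(g)=s_j(g)$ over $g\in\OPT_j$ using additivity of the social impact functions. Your write-up, including the remark that sa-envy already gives the inequality $s_i(\OPT_j)\geq s_j(\OPT_j)$ for free, is accurate and adds nothing that conflicts with the paper.
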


We can conclude that when goods have distinct social impacts for different agents the social optimum is also $\SEF$. However, when this property does not hold, it is even possible that $\SEF$ might not exist. Hence we next focus on $\SEF1$ allocations.  

A first interesting consequence of \Cref{corollary:SEF1equalImpactBundle} is that if there exists a sa-envy-cycle in the socially-aware envy graph, eliminating envy the social welfare does not decrease -- the social welfare actually stays the same as the envious agent has a social impact for the envied bundle equal to the one of the owner. This constitutes the bulk of our algorithm to compute a \maxUt\ allocation that is also $\SEF1$.

Our algorithm proceeds as follows:
At each step, it computes the sa-envy graph $G$ and a topological order of $G$. It then allocates the next good to the agent having the highest social impact for that good, selecting the one coming first in the topological order in case of ties. Finally, the sa-envy graph $G$ of the resulting allocation is constructed and all the sa-envy cycles are iteratively deleted. A formal description of this algorithm is given by Algorithm~\ref{algo:sEF1}.

\begin{theorem}
	Algorithm~\ref{algo:sEF1} provides a $\SEF1$ and \maxUt\ allocation.
\end{theorem}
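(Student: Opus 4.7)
\emph{Overall strategy.} The plan is to prove, by induction on the number of iterations of the main loop, that the partial allocation $\allocation$ maintained by Algorithm~\ref{algo:sEF1} satisfies two invariants after each step: (a) $\SW(\allocation)$ equals the maximum utilitarian welfare achievable over the already-allocated goods, and (b) $\allocation$ is $\SEF1$. Invariant (a) is immediate from the choice $i^*\in\argmax_i s_i(g)$ combined with the additivity of each $s_i$, and in particular the final allocation is \maxUt-optimal; the substance of the proof is invariant~(b).

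\emph{Inductive step for $\SEF1$.} Suppose inductively that $\allocation$ is partially optimal and $\SEF1$, and let $\allocation'$ be obtained by adding $g$ to $A_{i^*}$. Pairs not involving $i^*$ are untouched, so consider $(j,i^*)$ with $j\neq i^*$ and assume for contradiction that it violates $\SEF1$ in $\allocation'$. Then $j$ sa-envies $i^*$ in $\allocation'$ (since $\SEF\Rightarrow\SEF1$), and applying the partial-allocation analogue of \Cref{prop:SEF1equalImpactItem} to the optimal $\allocation'$ yields $s_j(g')=s_{i^*}(g')$ for every $g'\in A'_{i^*}$; in particular $j\in\argmax_i s_i(g)$ and $s_j(A_{i^*})=s_{i^*}(A_{i^*})$. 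The latter equality kills the social disjunct of $\SEF1$ for $(j,i^*)$ in $\allocation$, so the value disjunct must hold there; moreover, if $v_j(A_j)\geq v_j(A_{i^*})$, then using $g$ as the removed good shows $(j,i^*)$ is already $\SEF1$ in $\allocation'$, a contradiction. Hence $j$ must already sa-envy $i^*$ in $\allocation$ while lying in $\argmax_i s_i(g)$: a violation can only arise from an agent in $\argmax_i s_i(g)$ that sa-envies $i^*$ in $\allocation$.

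\emph{Main obstacle: finding a valid $i^*$.} The algorithm's tie-breaking via the topological order of the sa-envy graph is designed to block exactly this scenario by selecting $i^*$ as a source inside the subgraph of the sa-envy graph induced by $\argmax_i s_i(g)$; the obstacle is that no such source need exist when $\argmax_i s_i(g)$ sits inside a sa-envy cycle. The remedy is to invoke \cycleElimination\ on the sa-envy graph before the selection, and the main verification is that each cycle rotation preserves both invariants. Optimality is preserved because, by \Cref{corollary:SEF1equalImpactBundle}, the agents along any sa-envy cycle agree on the social impact of each other's bundles, so rotating leaves $\SW(\allocation)$ unchanged. Invariance of $\SEF1$ follows because (i) within the cycle, each $i_p$ now owns the strictly value-preferred bundle $A_{i_{p+1}}$ and is therefore envy-free toward all former sa-envy targets; (ii) for any outside agent $k$ facing the new owner $i_p$ of $A_{i_{p+1}}$, the value disjunct of the old pair $(k,i_{p+1})$ carries over verbatim, while the social disjunct $s_k(A_{i_{p+1}})<s_{i_{p+1}}(A_{i_{p+1}})$ extends to $s_k(A_{i_{p+1}})<s_{i_p}(A_{i_{p+1}})$ via the sa-envy inequality $s_{i_p}(A_{i_{p+1}})\geq s_{i_{p+1}}(A_{i_{p+1}})$. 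Since each rotation strictly increases $\sum_i v_i(A_i)$ and this quantity is bounded, after finitely many rotations a source appears in the restricted sa-envy subgraph, the current step completes while maintaining both invariants, and the induction closes.
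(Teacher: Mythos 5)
Your proof is correct and takes essentially the paper's own route: induction over the sequentially allocated goods, the equal-impact facts (\Cref{prop:SEF1equalImpactItem} and \Cref{corollary:SEF1equalImpactBundle} applied to the optimal partial allocation) to show that any $\SEF1$ violator must lie in $\argmax_{i}s_i(g)$ and must already sa-envy the recipient, and the topological-order tie-breaking to rule this out. The only difference is that you verify explicitly that sa-envy-cycle elimination preserves both optimality and $\SEF1$ and terminates, details the paper handles only in its surrounding discussion and proof sketch rather than in the formal proof.
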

\begin{proof}
	Since at every iteration of the while loop the current good $g$ is allocated to an agent in $\argmax_{i\in\agents}s_i(g)$, \maxUt\ follows.
	
	Let us focus on $\SEF1$. At the first execution of the while loop, only one good is allocated, thus the current partial allocation $\allocation$ is $\EF1$ and hence $\SEF1$.
	
	Let us assume that at the end of the $k$-th execution of the while loop the allocation is $\SEF1$. We show that the $(k+1)$-th execution preserves the $\SEF1$ condition. Let $g$ be the good allocated during the $(k+1)$-th iteration and $\allocation$ the resulting partial allocation. Let $j \in \argmax_{i\in\agents}s_i(g)$ be the agent receiving $g$ -- hence $g\in A_j$. For the sake of a contradiction let us assume there exists $i$ sa-envying $j$ even after removing any good in $A_j$, and in particular $g$. 
	
	By \Cref{corollary:SEF1equalImpactBundle}, $s_i(A_j)=s_j(A_j)$. By \Cref{prop:SEF1equalImpactItem} $s_i(g)=s_j(g)$, implying, $i\in \argmax_{i\in\agents}s_i(g)$ and also $s_i(A_i)=s_j(A_j\setminus\set{g})$. This means that $i$ was sa-envious towards $j$ before allocating $g$ and hence the good $g$ should have been given to $i$ because of the tie-breaking rule on Line~\ref{line:tie-breaking}.
\end{proof}

\end{document}